\newcommand{\Norm}[1]{\left\Vert#1\right\Vert}
\newcommand{\bd}{\boldsymbol d}
\newcommand{\bu}{\boldsymbol{u}}
\newcommand{\bx}{\boldsymbol{x}}
\newcommand{\balpha}{\boldsymbol{\alpha}}
\newcommand{\bbeta}{\boldsymbol{\beta}}
\def\Var{\mathop{\rm var}}
\newcommand{\bK}{\boldsymbol K}
\newcommand{\bX}{\boldsymbol X}
\newcommand{\bI}{\boldsymbol I}
\newcommand{\by}{\boldsymbol y}
\def\argmin{\mathop{\rm argmin}}
\def\hmi{\widehat{f}_0(\bx_i)}
\def\mi{{f}^*(\bx_i)}
\def\hpii{\widehat{\pi}_i}
\def\pii{{\pi}_i^*}
\newtheorem{lemma}{Lemma}
\newtheorem{theorem}{Theorem}
\newtheorem{prop}{Proposition}
\newtheorem{rem}{Remark}
\def\eqref#1{equation~\ref{#1}}
\def\1{\bm{1}}
\DeclareMathAlphabet{\mathsfit}{\encodingdefault}{\sfdefault}{m}{sl}
\SetMathAlphabet{\mathsfit}{bold}{\encodingdefault}{\sfdefault}{bx}{n}
 \newcommand{\E}{\mathbb{E}}
\begin{document}

\title{Nonparametric augmented probability weighting with sparsity}

\author{
Xin He, Xiaojun Mao, and Zhonglei Wang
\thanks{X. He is with the School of Statistics and Management, Shanghai University of Finance and Economics, Shanghai 200433, China (e-mail:he.xin17@mail.shufe.edu.cn);
X. Mao is with the School of Mathematical Sciences, Shanghai Jiao Tong University, Shanghai 200240, China (e-mail: maoxj@sjtu.edu.cn). Z. Wang is with Wang Yanan Institute for Studies in Economics and School of Economics, Xiamen University, Xiamen, Fujian 361005, China  (e-mail: wangzl@xmu.edu.cn). 
}
\thanks{X. He and X. Mao contributed equally to this work. Corresponding authors: Z. Wang.} 
}

% The paper headers
% \markboth{Journal of \LaTeX\ Class Files,~Vol.~XX, No.~X, XXXX~XXXX}%
% {Shell \MakeLowercase{\textit{et al.}}: A Sample Article Using IEEEtran.cls for IEEE Journals}

% \IEEEpubid{0000--0000/00\$00.00~\copyright~2021 IEEE}
% Remember, if you use this you must call \IEEEpubidadjcol in the second
% column for its text to clear the IEEEpubid mark.

\maketitle

\begin{abstract}
Nonresponse frequently arises in practice, and simply ignoring it may lead to erroneous inference. Besides, the number of collected covariates may increase as the sample size in modern statistics, so parametric imputation or propensity score weighting usually leads to inefficiency without consideration of sparsity. In this paper, we propose a nonparametric imputation method with sparse learning by employing an efficient kernel-based learning gradient algorithm to identify truly informative covariates. Moreover, an augmented probability weighting framework is adopted to improve the estimation efficiency of the nonparametric imputation method and establish the limiting distribution of the corresponding estimator under regularity assumptions. The performance of the proposed method is also supported by several simulated examples and one real-life analysis.

\emph{Keywords:} Central limit theorem, Reproducing kernel Hilbert space, Nonresponse, Sparse learning.

\end{abstract}

\newpage
\doublespacing
	\section{Introduction}

Nonresponse is a common problem in social science and other related fields, and simply ignoring it may lead to inefficiency or even erroneous inference due to confounding covariates \citep{rosenbaum1983central,qu2010highly,abadie2016matching, LIN2018}. Moreover, the  number of collected covariates is relatively large in modern statistics, which makes learning nonresponse even more challenging \citep{Yang2020}. How to deal with the nonresponse under the high-dimensional setup is still an open question. 

Sparse learning bridges the gap between the  high-dimensional data analysis and nonresponse. It is generally believed that among the numerous covariates, only a few  of them contribute to the response of interest, known as truly informative ones, while others are noise. Thus, a variety of sparse learning methods have been proposed to identify those truly informative covariates under regularity assumptions.  The linear response model assumption is popularly imposed, and various attempts have been made  by designing sparse-induced regularization  \citep{Tibshirani1996, Fan2001,ZouH2006, Shen2012, Shen2013},  evaluating the marginal dependence \citep{FAN2008, WangX2016}, or checking variable robustness against added noise \citep{Barber2018}. Extended methods have also been developed for nonparametric models \citep{Lin2006, Huang2010, Fan2011}. However, all these methods require explicit model assumptions that are difficult to validate in practice or suffer from heavy computational burden. To circumvent this difficulty, \citet{Belloni2013} proposed a feasible lasso method, which is similar to the adaptive lasso \citep{ZouH2006}, for variable selection in a partial linear model. By using machine learning algorithms to handle high-dimensional nuisance parameters, \citet{Chernozhukov2018} proposed a double/debiased machine learning procedure to achieve parametric convergence rate for a low dimensional parameter.  A valid double robust estimator using lasso-type penalty is discussed by \citet{Tan2020}. Recently, kernel-based sparse learning methods have been inspired by the fact that the gradient functions provide an appropriate criterion to identify a general dependence structure in a model-free fashion. Specifically, \citet{Rosasco2013} proposed a novel learning-gradient method, which adds an empirical functional penalty on the gradients to a standard kernel ridge regression in a reproducing kernel Hilbert space (RKHS). Besides, \citet{YangL2016} employed  pair-wise learning to estimate the gradient functions and considered a functional group lasso penalty to induce sparsity.  It is worth pointing out that  the lack of  selection consistency  \citep{Rosasco2013} and the high computational cost \citep{YangL2016} remain unsolved. To alleviate the difficulties, \citet{He20192} proposed a two-step sparse learning framework, which is computationally efficient in the sense that it only requires to fit the standard kernel ridge regression and the selection consistency is established under regularity assumptions. The method proposed by \citet{He20192} can  be regarded as a nonparametric joint screening approach and achieves methodological flexibility, numerical efficiency and asymptotic consistency simultaneously.

Propensity score weighting is commonly used to handle nonresponse \citep{aipw1994,WOOLDRIDGE20071281,tan2010bounded,graham2012inverse,zhao2017semiparametric}, but  conventional methods using all covariates may lead to numerical failure, including the lack of convergence and inefficiency, due to overfitting. Thus, sparse assumption is often imposed to estimate the propensity scores more efficiently \citep{Shevada2003,Genkin2007}. A Bayesian variable selection method has been proposed by \citet{Chen1999} for logistic regression; also see \citet{wainwright2007high}, \citet{banerjee2008model} and \citet{10.1214/09-AOS691} for details about penalized logistic regression models. The  group lasso  \citep{yuan2006model} was generalized to logistic regression model by  \citet{Kim2006}, and \citet{Meier2008}  proposed a more efficient group lasso algorithm than that of \citet{Kim2006}. Besides, \citet{Meier2008} also established the asymptotic consistency of the corresponding estimator.  \citet{Ning2020} proposed a high-dimensional covariate balancing propensity score estimator, and they validated that their proposed estimator is of parametric convergence rate and is asymptotically normal distributed. See \citet{tang2014feature} and \citet{10.2307/26408297} for a review of identifying informative covariates for logistic regression models.

In this paper, we propose a nonparametric Augmented Inverse Probability Weighting (AIPW) framework \citep{aipw1994} to handle the nonresponse under the assumption of sparsity. Inspired by the key observation that the gradient functions provide an appropriate information of the truly informative covariates in a model-free fashion, we employ a kernel-based sparse learning algorithm to efficiently impute the nonresponse. It only requires to fit the standard kernel ridge regression, which has an analytical solution, and the gradient functions can be directly computed by the derivative reproducing property. More importantly,  the truly informative covariates can be exactly recovered with high probability. Even though  the nonparametric  imputation with sparse learning achieves consistency, its convergence rate is at most $O_{\mathbb{P}}(n^{-1/6}\log(n))$ under regularity assumptions, so it is hard to construct an interval estimator. To alleviate this difficulty, an AIPW framework is adopted to  improve the convergence rate of the corresponding estimator, and a central limit theorem can be established. To achieve this goal, certain  propensity score methods for analyzing sparse data suffice under regularity conditions; see Section~\ref{sec: group lasso} for details. The corresponding variance estimator is also discussed.   The superior performance of the proposed nonparametric AIPW framework is also supported by the numerical comparisons against some state-of-the-art methods in several simulated examples and one real-life analysis.

The rest of this paper is organized as follows. Section \ref{gen_inst} provides the background and introduces the proposed nonparametric AIPW framework.  The theoretical properties of the corresponding estimator are established under  regularity assumptions in Section \ref{headings}. Section \ref{others} reports the numerical experiments on the simulated and real-life examples. A brief summary is provided in Section \ref{Conclude}.

\section{Method}
\label{gen_inst}\label{sec: feature selection by derivatives}

Consider
\begin{align}\label{true:model}
	Y= f^*(\bx)+\epsilon,
\end{align}
where $f^*(\bx) = \E(Y\mid \bX=\bx)$ is a continuous function of a covariate vector $\bx=(x_1,...,x_p)^{\top}$ taking values from  a $p$ dimensional separable and compact metric space ${\cal X}\subset \mathbb{R}^p$,  and $\epsilon$ denotes a random noise   with conditional mean zero and bounded variance.  We are interested in inferring $\theta^* = \E(Y)$ from a random sample $\{(\bx_i,y_i):i=1,\ldots,n\}$ generated by (\ref{true:model}).

If $y_1,\ldots,y_n$ were fully observed, the  sample mean $\widehat{\theta}_n=n^{-1}\sum_{i=1}^ny_i$ would be an efficient estimator of $\theta^*$. However, it is generally not the case in practice, and the response of interest  suffers from nonresponse. For $i=1,\ldots,n$, denote $\delta_i$ to be the response indicator of $y_i$, where $\delta_i=1$ if $y_i$ is observed and 0 otherwise. For simplicity, we assume missing at random  \citep{rubin1976inference}  for the response mechanism, 
\begin{equation}\label{eq: MAR 1}\Pr(\delta_i=1\mid \bx_i,y_i) = \Pr(\delta_i=1\mid \bx_i),\end{equation} and denote $\pi^*(\bx)=\Pr(\delta=1\mid \bx)$.

If consistent estimators $\widehat{f}_0(\bx)$ and $\widehat{\pi}(\bx)$ for $f^*(\bx)$ and $\pi^*(\bx)$ are available, then an AIPW estimator,
\begin{equation}
	\widehat{\theta}_{AIPW} = \frac{1}{n}\sum_{i=1}^n\left[\widehat{f}_0(\bx_i) + \frac{\delta_i}{\widehat{\pi}(\bx_i)} \{y_i-\widehat{f}_0(\bx_i)\} \right],\label{eq: final est}
\end{equation}
can be applied to estimate $\theta^*$. More rigorously, the estimator (\ref{eq: final est}) is not an AIPW estimator unless we replace $n^{-1}$ by $(\sum_{i=1}^n\delta_i\widehat{\pi}_i^{-1})^{-1}$. However, if the response model is correctly specified, we can show that $n^{-1}\sum_{i=1}^n\delta_i\widehat{\pi}_i^{-1}\to1$ in probability under regularity conditions. When $p$ is small and $f^*(\bx)$ is a parametric model, standard statistical methods can be used to obtain  $\widehat{f}_0(\bx)$ and $\widehat{\pi}(\bx)$. As $p$ increases, however, it is not reasonable to include all  covariates to estimate $f^*(\bx)$ and $\pi^*(\bx)$ due to the curse of dimensionality. Moreover, the model misspecification for $f^*(\bx)$ leads to erroneous inference.

\subsection{Estimation of ${f}^*(x)$ via nonparametric sparse learning}
To overcome those difficulties, we employ an efficient kernel-based sparse learning algorithm \citep{He20192} to estimate  $f^*(\bx)$ in (\ref{true:model}). Denote ${\cal H}_K$ to be an RKHS induced by a pre-specified kernel function $K(\cdot,\cdot)$, where  $K(\cdot,\cdot):{\cal X}\times{\cal X}\rightarrow {\mathbb{R}}$ is bounded, symmetric and positive semi-definite. It can be shown that ${\cal H}_K$ associated with the kernel $K(\cdot,\cdot)$ is the completion of the linear space spanned by $\{K_{\bx}(\cdot): \bx {\in} {\cal X} \}$ with an inner product  $\langle K_{\bx}, K_{\bu}\rangle_K=K(\bx, \bu)$ for $\bx, \bu \in {\cal X}$, where $K_{\bx}(\cdot)=K(\bx,\cdot)$. Thus, ${\cal H}_K$ is uniquely determined by a kernel function  $K(\cdot,\cdot)$ and the reproducing property, $\langle f, {K}_{\bx} \rangle_{K}=f(\bx)$ for  $f \in {\cal H}_K$ and $\bx\in{\cal X}$. 
It is noteworthy that the RKHS induced by some universal kernel, such as the Gaussian kernel, is  fairly large  in the sense that any continuous function can be  well approximated \citep{Steinwart2005}. Thus, by (\ref{eq: MAR 1}), we assume
\begin{align}\label{eqn:motivate21}
	f^*(\bx)=\argmin_{f \in {\cal H}_K} \E  [\delta\{y-f({\bx})\}]^2.
\end{align}
By (\ref{eq: MAR 1}), (\ref{eqn:motivate21}) implies that $f^*(\bx)= \E(Y\mid \bX=\bx)\in {\cal H}_K$.

A covariate $x_l$ is non-informative for  $f^*(\bx)$, if and only if $g^*_{l}(\bx)=0$ for ${\bx} \in {\cal X}$ almost surely, where  $x_l$ is the $l$th component of $\bx$ for $l=1,\ldots,p$, and $g_l^*(\bx)=\partial f^*(\bx)/\partial x_l$.
Thus, the usefulness of $x_l$ for estimating $f^*(\bx)$ can be measured via the  ${\cal L}^2$-norm of $g^*_{l}(\bx)$, 
$
\| g^*_{l} \|^2_{2} =\int_{\cal X}  \{ g_{l}^*(\bx) \}^2 d\rho(\bx),
$
where $\rho(\bx)$ is the marginal distribution of $\bX$. Denote ${\cal A}^*=\big \{l:\| g^*_{l} \|^2_{2}>0 \big \}$ to be the active set containing all informative covariates associated with  $f^*(\bx)$. To estimate $g^*_l(\bx)$ efficiently, we consider the following derivative reproducing property \citep{ZhouDX2007}:
\begin{align}\label{eqn:motivate2}
	g^*_{l}(\bx)= \langle f^*, {\partial_l {K}_{\bx}} \rangle_K,
\end{align}
where $\partial_l {K}_{\bx}(\cdot) = \partial K(\bx,\cdot)/\partial x_l$. Specifically, by (\ref{eqn:motivate21})--(\ref{eqn:motivate2}),  once an initial estimator of  $f^*(\bx)$ is available, say $\widehat{f}(\bx)$, its gradient function $g^*_{l}(\bx)$ can be estimated by (\ref{eqn:motivate2}). Without loss of generality, we assume that the first $m$ samples are fully observed, and the subsequent analysis is conditional on the realised sample, where $m=\sum_{i=1}^n\delta_i$. Under (\ref{eq: MAR 1}) and the high-dimensional setup, to obtain  an initial estimator  $\widehat{f}(\bx)$,  we  employ the standard  kernel ridge regression  by solving  
\begin{equation}\label{eqn:distmin}
	\widehat{f}=\underset{f\in{\cal H}_K}{\arg\min}  \Big [ \frac{1}{m}\sum_{i=1}^m \left\{y_i- f(\bx_i)\right\}^2 +\lambda\Norm{f}_{K}^2 \Big ],
\end{equation}
where   $\lambda>0$ is a tuning parameter controlling the model complexity and typically goes to 0 as $m$ goes to infinity, and  $\Norm{\cdot}_{K}^2$ denotes the RKHS-norm induced by the inner product $\langle \cdot, \cdot \rangle_K$; see Section~\ref{headings} for details.  By the representer theorem \citep{Mercer1909}, the solution of (\ref{eqn:distmin}) is of the form
\begin{align}\label{eqn:22}
	\widehat{f}(\bx)=\sum_{i =1}^m \widehat{\alpha}_iK({\bx}_i,\bx)=\widehat{\balpha}^\top{\bK}_{m}(\bx),
\end{align}
where ${\bK}_{m}(\bx)=(K({\bx}_1,\bx),...,K({\bx}_m,\bx))^\top$, and $\widehat{\balpha}=(\widehat{\alpha}_1,...,\widehat{\alpha}_m)^\top\subset \mathbb{R}^m$ are the estimated representer coefficients.  That is,  the representer theorem converts  the original optimization  task (\ref{eqn:distmin}) in  an infinite functional space ${\cal H}_K$ into a $m$-dimensional vector space. By  (\ref{eqn:22}),  the optimization task (\ref{eqn:distmin}) is equivalent to 
\begin{equation*}\label{eqn:distmin2}
	\widehat{\balpha}=\underset{\balpha}{\argmin}  \Big [ \frac{1}{m}\sum_{i=1}^m  \left\{y_i- \balpha^\top{\bK}_{m}(\bx_i)\right\}^2 +\lambda\balpha^T\bK \balpha\Big ],
\end{equation*}
and its solution is
$
\widehat{\balpha}= (\bK  +\lambda \bI  )^{-1}  {\by},
$
where $\bK$ is an $m\times m$ matrix with $(i,j)$th entry being $K(\bx_i,\bx_j)$  and $\by =(y_1,\ldots,y_m)^{\top} \in \mathbb{R}^m$.

Once $\widehat{\balpha}$ is obtained, the gradient function in (\ref{eqn:motivate2}) can be estimated  by
$$
\widehat{g}_{l}(\bx)= \frac{\partial \widehat{f}({\bx})}{\partial x_l } = \widehat{\balpha}^\top  {\partial_l {\bK}_m({\bx})}, \quad (l=1,...,p),
$$
where  ${\partial_l {\bK}_m({\bx})}=(\partial_l {K}_{\bx_1}({\bx}),\ldots, \partial_l {K}_{\bx_m}({\bx}))^\top$. Since  the marginal distribution $\rho(\bx)$ is seldom available, instead of the  ${\cal L}^2$-norm, the empirical norm $\Norm{\cdot}_m$ is considered:
$$
\| \widehat{g}_{l}  \|^2_m= \frac{1}{ m} \sum_{i=1}^m \big \{\widehat{g}_{l}(\bx_i) \big \}^2=  \frac{1}{ m}  \sum_{i=1}^m  \big \{ \widehat{\balpha}^\top  {\partial_l {\bK}_m({\bx}_i)} \big \}^2,
$$
and  the estimated active set is
$
\widehat{\cal A}_{v_m}=\Big \{l: \| \widehat{g}_{l}  \|^2_m > v_m \Big \},
$
where  $v_m$ is a  thresholding value determined through a stability-based selection criterion \citep{SunWW2013}. Finally, we refit (\ref{eqn:distmin}) with the selected covariates in    $\widehat{\cal A}_{v_n}$ to obtain the nonparametric estimator $\widehat{f}_0(\bx)$.

It is worthy pointing out that the  employed  sparse learning algorithm  was originally proposed by \citet{He20192}, and they only focused on the purpose of variable selection and established the selection consistency without considering nonresponse. Yet, we generalized their method to handle incomplete samples in this paper and treat it as an valid intermediate  estimator of our proposed  estimator.  More importantly, we further established a central limit theorem for the proposed nonparamtric estimator, which is rare and attractive in machine learning, and a variance estimation is also provided as well; see Section~\ref{headings} for details. 
%\textcolor{blue}{
%\begin{rem}
%The proposed variable selection algorithm differs from  \cite{He20192} in the following aspects. First, \cite{He20192} only focused on feature selection for a complete sample, but we generalized their method to handle incomplete samples.  Second, different from \cite{He20192}, we have established a central limit theorem for the proposed estimator, and a variance estimation is discussed as well; see Section~\ref{headings} for details. 
%\end{rem}
%}

\subsection{Estimation of $\pi^*(x)$}\label{sec: group lasso}
There exist works to estimate $\pi^*(\bx)$ under the assumption of sparsity, and we consider the group lasso \citep{Meier2008} as an example  by assuming
\begin{equation*}
	\mbox{logit}\{\pi^*(\bx_i)\} = \beta^*_0 +\bx_i^\top\bbeta^*_1,\label{eq: response prob} 
\end{equation*}
where $\mbox{logit}(z) = \log(z) - \log(1-z)$ for $z\in(0,1)$. In addition, assume that the covariate vector can be rewritten as $\bx = (\bx_{1}^{\top},\ldots,\bx_{G}^{\top})^{\top}$, where $\bx_{g}\in\mathbb{R}^{\mbox{\scriptsize df}_g}$ contains the covariates of the $g$th group for $g=1,\ldots,G$, and $\mbox{df}_g$ is the corresponding degrees of freedom. For example, $\mbox{df}_g=3$ if $\bx_{g}$ corresponds to a categorical covariate with four levels, and $\mbox{df}_g=1$ if $x_{g}$ is continuous; see \citet{Meier2008} for details.

The log-likelihood  estimator with a group lasso penalty  is obtained by solving
\begin{eqnarray}
	\widehat{\bbeta}_{\lambda_2}
	&=&\argmin_{\bbeta}  \Big \{ -l(\bbeta) +\lambda_2p(\bbeta)    \Big \}, \label{eq: logsitic28}
\end{eqnarray}
where $l(\bbeta) = \sum_{i=1}^n[\delta_i\log\{\pi(\bx_i)\} + (1-\delta_i)\log\{1-\pi(\bx_i)\}]$ is the log-likelihood of the response indicators, $\bbeta^\top=(\beta_0,\bbeta_1^\top) \in\mathbb{R}^{p+1}$ with $\beta_0\in\mathbb{R}$ and $\bbeta_1\in\mathbb{R}^p$, $p(\bbeta) =\sum_{g=1}^G \mbox{df}_g^{1/2}\lVert \bbeta_g\rVert_2  $ is the group lasso penalty, $\lVert\cdot\rVert_2$ is the Euclidean norm, and $\bbeta_g$ corresponds to  $\bx_{i,g}$ for $g=1,\ldots,G$.  The block co-ordinate gradient descent algorithm  is used to obtain $\widehat{\bbeta}_{\lambda_2}$ in (\ref{eq: logsitic28}), and the detailed algorithm is adjourned to Appendix~\ref{append: BCGD ALG}.

\begin{rem}
	Since the estimated response probability is used to improve the convergence rate of the estimator in (\ref{eq: final est}), the response model is assumed to be correctly specified; see \citet{Shenkan2017} for a similar assumption. In addition to the group lasso method \citep{Meier2008}, other penalized logistic regression estimators \citep{fan2014strong,Ning2020} can be used to estimate the response probability. However, to guarantee the asymptotic central limit theorem in Theorem~\ref{thm:clt}, the estimated response probability by other methods should satisfy Lemma~\ref{cor: 1}; see Section~\ref{headings} for details. 
\end{rem}

\section{Theoretical Properties}
\label{headings}

In this section, we investigate the asymptotic consistency of $\widehat{f}_0(\bx)$ and establish the central limit theorem for the  AIPW estimator    in (\ref{eq: final est}) under  regularity assumptions.

Denote an integral operator  
$L_K: {\cal L}^2({\cal X},{\rho}) \rightarrow  {\cal L}^2({\cal X},{\rho})$ as
$
L_K(f)(\bx)=\int_{\cal X} K(\bx,\bu)f(\bu)d\rho(\bu),
$
for  $f \in {\cal L}^2({\cal X},{\rho})$, where ${\cal L}^2({\cal X},{\rho})=\{f: \int f^2(\bx)d\rho(\bx)<\infty\}$.  If the RKHS ${\cal H}_K$ is separable, then by the spectral theorem \citep{Fischer2020}, we have
$
L_Kf=\sum_{j\geq 1}\mu_j\langle f , e_j \rangle_2e_j,
$
where $\{e_j:j=1,2,\ldots\}$ form an orthonormal basis of ${\cal L}^2({\cal X}, \rho)$, $\{\mu_j:j=1,2,\ldots\}$ are the corresponding eigenvalues with respect to $L_K$, and $\langle f, g\rangle_2 = \int_{\cal X} f(\bx)g(\bx)d\rho(\bx)$ denotes the inner product of $f(\bx)$ and $g(\bx)$ in ${\cal L}^2({\cal X},\rho)$.  
By Mercer's theorem \citep{Steinwart2008}, under regularity assumptions, the eigen-expansion of $K(\bx,\bu)$ is
$
K(\bx,\bu)=\sum_{j\geq 1}\mu_j e_j(\bx)e_j(\bu).
$
Hence, the RKHS-norm of any  $f \in {\cal H}_K$ can also be expressed as
\begin{align*}
	\|f\|^2_K= \sum_{j\geq 1} \frac{ \langle f, e_j \rangle^2_2}{\mu_j}.
\end{align*}
The above result implies the decay rate of $\mu_j$ fully characterizes the complexity of the RKHS. 

The following technical assumptions are made to investigate the theoretical properties of  the AIPW in (\ref{eq: final est}).

{\noindent \bf Assumption 1}: There exists a positive constant $r\in (1/2,1]$, such that $f^*(\bx)$ is in the range of the $r$th power of $L_K$, denoted as $L_{K}^r$.  Besides, the distribution of  $\epsilon$ has a $q$-exponential tail with some function $q(\cdot)$; that is, there exists a constant $c_1>0$, such that $\Pr(\lvert\epsilon\rvert>t)\leq c_1\exp\{-q(t)\}$ for any $t>0$.

{\noindent \bf Assumption 2}: There exist positive values  $\kappa_{1,p}$ and $\kappa_{2,p}$, which may depend on $p$, such that $\sup\limits_{\bx \in {\cal X}} \|K_{\bx}\|_K\leq \kappa_{1,p}$ and $\sup\limits_{\bx \in {\cal X}}\|\partial_lK_{\bx}\|_K\leq \kappa_{2,p}$ for  $l=1,...,p$.

{\noindent\bf Assumption 3}: There exists a positive constant $\xi_1 >1$ such that
\begin{align*}
	&\min_{l \in {\cal A}^*}\left \| g^*_l \right \|^2_{2}> \\
	&c_{m} \max \left\{ \kappa_{1,p}\|f^*\|_K, q^{-1}\left(\log\frac{4c_1 m}{\delta_m}\right) \right\} m^{-\frac{2r-1}{2(2r+1)}} (\log p)^{\xi_1},
\end{align*}
where $c_m$ is provided in Lemma \ref{thm1}.

{\noindent \bf Assumption 4}: There exists $\kappa\in(0,1/2)$ such that $\kappa < \pi(\bx)<1-\kappa$ for all $\bx\in{\cal X}$.

{\noindent \bf Assumption 5}: $\E(\bX\bX^\top)$ is invertible, and its smallest eigenvalue is bounded away from zero by a fixed positive constant $c_{min}$, and recall that $\bX$ is the random vector associated with $\bx_1,\ldots,\bx_n$.

{\noindent \bf Assumption 6}: Let $\bX_g$ be the random vector associated with the $g$th group, and we normalize $\bX_g$ such that $\E(\bX_g^{\top}\bX_g)$ is a $\mbox{df}_g\times\mbox{df}_g$ identity matrix. Then, there exists $L_n$ such that 
$
\max_{\bx}\max_{g}(\bx_g^{\top}\bx_g)\leq nL_n^2,
$
where $\bx_g$ corresponds to the normalized $\bX_g$.

{\noindent \bf Assumption 7}: $\max_{g=1,\ldots,G}\mbox{df}_g=O(1)$, there exists a constant number $\zeta>0$ such that $\log(G)=o(n^{1/3-2\zeta})$ and $G\gg \log(n)$,  $N_0=O(1)$, $\lambda_2 \asymp \log(G)$, i.e. $\lambda_2$ is of the order $\log(G)$, and $L_n^2 = O\{1/\log(G)\}$, where $N_0$ is the number of non-zero group effects. 

Assumptions 1--3 are proposed for the kernel-based sparse learning algorithm, and Assumptions 4--7 are required by the group lasso logistic regression. In Assumption 1, the integral operator $L_K$ is self-adjoint and semi-positive definite, so its fractional operator $L_{K}^r$ is well-defined, and its range is contained in ${\cal H}_K$ as long as $r\geq 1/2$; see \citet{Smale2007} and \citet{Mendelson2010} for details. This implies that for some function $h \in  {\cal L}^2({\cal X}, \rho)$, it holds
$
L^r_Kf^*=\sum_{j\geq 1}\mu^r_j\langle h , e_j \rangle_2e_j \in {\cal H}_K,
$ 
so ensures strong estimation consistency under the RKHS-norm.  The second part of Assumption 1 characterizes the tail behavior of the error distribution, and it relaxes the commonly-used bounded response assumption in the machine learning literature \citep{Smale2007, Rosasco2013, Lv2018}. Besides, the assumption on the error distribution is general and can be satisfied by a variety of distributions \citep{WangX2016, ZhangC2016}. For example, if the error distribution  is sub-Gaussian or bounded, then  $q(t)=O(t^2)$ suffices; if the distribution of $\epsilon$ is sub-exponential,  $q(t)=O(\min\{t/C, t^2/C^2 \})$ suffices with $C>0$.  Assumption 2 bounds  the RKHS-norms associated with the kernel function and its gradient functions, and it is satisfied by popular kernels, including the Gaussian kernel, linear kernel and the Sobolev kernel \citep{Smale2007, Rosasco2013, YangL2016}. For example, $\kappa_{1,p}=\kappa_{2,p}=1$ for the Gaussian kernel, $K(\bx, \bu)=\exp\{- \|\bx-\bu \|_2^2 /(2\sigma^2)\}$, and  $\kappa_{1,p}=Cp$ and $\kappa_{2,p}=C$ for the linear kernel, $K(\bx, \bu)=\bx^\top\bu$, for some positive constant $C$. Assumption 3 requires that  the  gradient functions contain sufficient information about the truly informative covariates. It is worthy pointing out that we measure the significance  of each gradient function  to distinguish  informative and uninformative covariates without any explicit model specification. The  minimal signal strength in Assumption 3 is much tighter than those in other nonparametric sparse learning methods \citep{Huang2010, YangL2016}, which often require the signal strength to be  bounded below by some positive constant.
Assumption~4 bounds the response probability, and it is commonly used to avoid inefficient estimators. To obtain the desired convergence rate, Assumption~4  guarantees that $m\asymp n$ in probability, where $a_n\asymp b_n$ is equivalent to $a_n=O(b_n)$ and $b_n=O(a_n)$.
The smallest eigenvalue of  $\E(\bX\bX^\top)$ is bounded by a fixed positive constant in Assumption~5, and it is a special case of  assumption (b) of \citet{Meier2008}. In Assumption~6, the convergence rate of $L_n$ is related with that of the estimated response probability. Assumption~7 is used to guarantee that 
\begin{equation}\label{eq: append 1}
	\E\{\lvert\eta_{\widehat{\bbeta}_{\lambda_2}}(\bX) - \eta_{{\bbeta}^*}(\bX) \rvert^2\} =  O_{\mathbb{P}}(n^{-2/3-2\zeta}),
\end{equation}
where $\eta_{\bbeta}(\bx)=\beta_0 +\bx^\top\bbeta_1$, $\bbeta^* = (\beta_0^*,\bbeta_1^*)$ and the expectation is taken with respect to $\bX$ conditional on $\widehat{\bbeta}_{\lambda_2}$ or the observations. Specifically, the value $\zeta$ is used to show $n^{-\zeta/2}\log(n)\to0$ as $n\to 0$, so it can be chosen arbitrarily small; see the proof of Theorem~\ref{thm:clt} for details.

\begin{lemma}\label{thm1}
	Suppose  Assumptions 1--2 are satisfied, 
	and $\lambda={m^{-1/(2r+1)}}$. Then, for any $\delta_m \in (0,1)$, with probability at least $1-\delta_m$, there holds
	\begin{align}\label{thm13}
		\max_{1 \leq l \leq p} \ \big\lvert  \| \widehat{g}_l \|^2_m -  \| g^*_l \|_{2}^2 \big\rvert  \leq c_{m} c_{p, q}  {\log\Big(  \frac{8p}{\delta_m}\Big)}{m^{-\Theta}},\notag
	\end{align}
	where 
	$c_{m}$ is a constant depending only on $\kappa_{1,p}, \kappa_{2,p}$ and $\|f^*\|^2_{K}$, $c_{p, q}=\max \left\{ \kappa_{1,p}\|f^*\|_K, q^{-1}\left(\log\frac{4c_1 m}{\delta_m}\right) \right\}$ with $q^{-1}(\cdot)$ denoting the inverse function of $q(\cdot)$, and $\Theta={\frac{2r-1}{2(2r+1)}}$.
\end{lemma}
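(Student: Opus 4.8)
The plan is to control the maximal deviation by splitting it into an estimation error and a concentration error via the triangle inequality,
\begin{align*}
\max_{1\le l\le p}\bigl\lvert \Norm{\widehat{g}_l}_m^2 - \Norm{g_l^*}_2^2\bigr\rvert
\le \max_{1\le l\le p}\bigl\lvert \Norm{\widehat{g}_l}_m^2 - \Norm{g_l^*}_m^2\bigr\rvert
+ \max_{1\le l\le p}\bigl\lvert \Norm{g_l^*}_m^2 - \Norm{g_l^*}_2^2\bigr\rvert ,
\end{align*}
and then to show that the first (estimation) term is of order $c_{p,q}\,m^{-\Theta}$ while the second (concentration) term is of the strictly smaller order $m^{-1/2}$, so that the stated bound is driven by the estimation term.

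For the estimation term I would invoke the derivative reproducing property (\ref{eqn:motivate2}), which gives $\widehat{g}_l(\bx)-g_l^*(\bx)=\langle \widehat{f}-f^*,\, \partial_l K_{\bx}\rangle_K$. Cauchy--Schwarz together with Assumption 2 then yields the uniform pointwise bound $\sup_{\bx}\lvert \widehat{g}_l(\bx)-g_l^*(\bx)\rvert \le \kappa_{2,p}\Norm{\widehat{f}-f^*}_K$, and similarly $\sup_{\bx}\lvert g_l^*(\bx)\rvert \le \kappa_{2,p}\Norm{f^*}_K$. Factoring $\Norm{\widehat{g}_l}_m^2-\Norm{g_l^*}_m^2 = m^{-1}\sum_i\bigl(\widehat{g}_l(\bx_i)-g_l^*(\bx_i)\bigr)\bigl(\widehat{g}_l(\bx_i)+g_l^*(\bx_i)\bigr)$ and bounding each factor reduces the whole term to the RKHS-norm rate $\Norm{\widehat{f}-f^*}_K$ multiplied by $\kappa_{2,p}^2\bigl(\Norm{\widehat{f}}_K+\Norm{f^*}_K\bigr)$, where on the relevant high-probability event $\Norm{\widehat{f}}_K$ is a constant multiple of $\Norm{f^*}_K$.

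The crux is therefore to establish that, with high probability, $\Norm{\widehat{f}-f^*}_K \le C\,c_{p,q}\,m^{-\Theta}$ under the source condition of Assumption 1 and the choice $\lambda=m^{-1/(2r+1)}$. Following the integral-operator analysis of \citet{He20192}, I would decompose $\widehat{f}-f^*=(\widehat{f}-f_\lambda)+(f_\lambda-f^*)$, where $f_\lambda=(L_K+\lambda I)^{-1}L_K f^*$ is the population regularized target. Since $r>1/2$, the source condition $f^*\in\mathrm{Range}(L_K^r)$ controls the bias $\Norm{f_\lambda-f^*}_K$ at order $\Norm{f^*}_K\,\lambda^{r-1/2}=\Norm{f^*}_K\,m^{-\Theta}$. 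The sample error $\Norm{\widehat{f}-f_\lambda}_K$ is governed by concentration of the empirical integral operator and of the noise term $m^{-1}\sum_i \epsilon_i K_{\bx_i}$, and this is where the $q$-exponential tail of Assumption 1 enters. The hard part will be handling the unbounded noise: I would truncate at the level $t_m=q^{-1}\!\left(\log(4c_1 m/\delta_m)\right)$, noting that $\mathbb{P}\bigl(\max_{i\le m}\lvert\epsilon_i\rvert>t_m\bigr)\le m c_1\exp\{-q(t_m)\}=\delta_m/4$, and then apply a Bernstein-type operator concentration to the truncated noise. This step produces the factor $q^{-1}\!\left(\log(4c_1 m/\delta_m)\right)$ in $c_{p,q}$ and is the part most sensitive to the tail function $q(\cdot)$.

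Finally, for the concentration term the summands $\{g_l^*(\bx_i)^2\}_i$ are i.i.d., nonnegative, and bounded by $\kappa_{2,p}^2\Norm{f^*}_K^2$ with mean $\Norm{g_l^*}_2^2$; Hoeffding's inequality followed by a union bound over $l=1,\dots,p$ gives $\max_l\lvert \Norm{g_l^*}_m^2-\Norm{g_l^*}_2^2\rvert \le C'\kappa_{2,p}^2\Norm{f^*}_K^2\sqrt{\log(p/\delta_m)/m}$ with high probability. Since $\Theta\le 1/6<1/2$, this is of smaller order than $m^{-\Theta}$ and is absorbed into the estimation term. Splitting the failure probability $\delta_m$ across the noise-truncation event, the operator-concentration event, and the Hoeffding union bound, collecting the kernel-dependent and $\Norm{f^*}_K^2$ constants into $c_m$, and bounding $\sqrt{\log(p/\delta_m)}$ and the noise level against $\log(8p/\delta_m)$ and $c_{p,q}$, then yields the claimed inequality.
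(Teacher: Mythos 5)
Your proposal is correct and follows essentially the same route as the paper's (omitted) proof, which cites \citet{He20192}: the estimation term is reduced via the derivative reproducing property and Cauchy--Schwarz to the RKHS-norm rate of Proposition~\ref{prop1} (itself obtained from the source condition, operator concentration in Hilbert--Schmidt norm, and truncation of the noise at the $q$-exponential level), while the empirical-versus-population norm of $g_l^*$ is handled by concentration plus a union bound over $l$. The only cosmetic discrepancy is that the approximation-error constant should involve $\|L_K^{-r}f^*\|_2$ rather than $\|f^*\|_K$, as in Proposition~\ref{prop1}.
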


Lemma \ref{thm1}  guarantees that $\left \|\widehat{g}_l\right \|^2_m$ converges to $\| g_l^* \|_{2}^2$ with high probability, and it is crucial to establish the selection consistency of the employed sparse learning algorithm. The convergence result in Lemma~\ref{thm1} still holds even when the dimension diverges with the sample size, and the quantities $\|f^*\|^2_{K}$ and  $\|L^{-r}_{K}f^*\|_2$, which  may depend on the number of truly informative covariates of $f^*(\bx)$,  may also diverge as the sample size increases. For instance,  if $f^*(\bx)=\bx^T\bbeta^*$, then $\|f^*\|^2_{K}=\|\bbeta^*\|_2^2$, which clearly depends on the number of truly informative covariates. However,  such dependency is difficult to quantify explicitly in a fully general case \citep{Fukumiza2014}.

The following lemma establishes the asymptotic selection consistency of the proposed sparse learning method.

\begin{lemma}\label{thm3}
	Suppose that  the assumptions of Lemma \ref{thm1} and Assumption 3 are satisfied. If $v_m=0.5c_m c_{p,q} m^{-\Theta} (\log p)^{\xi_1},$ then
	$
	\Pr\big( \widehat{\cal A}_{v_m}={\cal A}^* \big) \rightarrow 1, \ \ \mbox{as} \ \ m\rightarrow \infty.
	$
\end{lemma}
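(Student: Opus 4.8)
The plan is to show that the thresholding estimator $\widehat{\cal A}_{v_m}$ recovers the true active set ${\cal A}^*$ by establishing two complementary no-error events: no truly informative covariate is missed, and no noise covariate is falsely included. Both events will be controlled uniformly over $l=1,\ldots,p$ using the convergence result of Lemma~\ref{thm1}. First I would invoke Lemma~\ref{thm1} to obtain, with probability at least $1-\delta_m$, the uniform bound
\begin{align*}
	\max_{1\leq l\leq p}\big\lvert \|\widehat{g}_l\|^2_m - \|g^*_l\|^2_2\big\rvert \leq c_m c_{p,q}\log\!\Big(\frac{8p}{\delta_m}\Big)m^{-\Theta}.
\end{align*}
The key observation is that the chosen threshold $v_m=0.5\,c_m c_{p,q}m^{-\Theta}(\log p)^{\xi_1}$ sits strictly between the estimation error and the minimal signal strength guaranteed by Assumption~3.

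For the \emph{no false exclusion} direction, suppose $l\in{\cal A}^*$. Then Assumption~3 gives $\|g^*_l\|^2_2 > c_m c_{p,q}m^{-\Theta}(\log p)^{\xi_1} = 2v_m$ (after absorbing the $\max\{\cdots\}$ factor, which is exactly $c_{p,q}$, into the bound). Combining this with the uniform deviation bound, on the high-probability event one has $\|\widehat{g}_l\|^2_m \geq \|g^*_l\|^2_2 - c_m c_{p,q}\log(8p/\delta_m)m^{-\Theta}$, and I would argue this lower bound exceeds $v_m$ once $(\log p)^{\xi_1}$ dominates $\log(8p/\delta_m)$ for a suitable sequence $\delta_m\to 0$, so $l\in\widehat{\cal A}_{v_m}$. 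For the \emph{no false inclusion} direction, suppose $l\notin{\cal A}^*$, so $\|g^*_l\|^2_2=0$; then $\|\widehat{g}_l\|^2_m \leq c_m c_{p,q}\log(8p/\delta_m)m^{-\Theta} < v_m$ on the same event, so $l\notin\widehat{\cal A}_{v_m}$.

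The main obstacle, and the step requiring care, is reconciling the two logarithmic factors: the deviation bound carries $\log(8p/\delta_m)$ while the threshold and signal condition carry $(\log p)^{\xi_1}$ with $\xi_1>1$. I would choose $\delta_m$ to decay polynomially in $m$ (e.g.\ $\delta_m = m^{-c}$) so that $\log(8p/\delta_m) = O(\log p + \log m)$, and then use $\xi_1>1$ together with the implicit regime $\log p \asymp \log m$ (or $\log m = O(\log p)$) to ensure $(\log p)^{\xi_1}$ strictly dominates $\log(8p/\delta_m)$ for all large $m$. This guarantees a genuine gap so that both inequalities above hold simultaneously. Finally, since $\delta_m\to 0$, taking the intersection over the two directions and the complement of the failure event yields $\Pr(\widehat{\cal A}_{v_m}={\cal A}^*)\geq 1-\delta_m \to 1$ as $m\to\infty$, completing the proof.
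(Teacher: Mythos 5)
Your proposal is correct and is essentially the argument the paper intends: the paper omits this proof entirely, deferring to \citet{He20192}, and the standard two-sided thresholding argument there is exactly what you reconstruct --- combine the uniform deviation bound of Lemma~\ref{thm1} with Assumption~3 (whose $\max\{\cdots\}$ factor is precisely $c_{p,q}$, so the signal floor equals $2v_m$) to rule out both false exclusions and false inclusions on the same high-probability event. You also correctly isolate the one delicate point, namely that $\xi_1>1$ only buys the required gap $(\log p)^{\xi_1}\gg\log(8p/\delta_m)$ in a regime where $p$ diverges and $\delta_m$ is chosen so that $\log(1/\delta_m)=O(\log p)$, an implicit condition the paper never states but which your choice $\delta_m=m^{-c}$ with $\log m=O(\log p)$ handles appropriately.
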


Lemma \ref{thm3} shows that the selected covariates can exactly recover the truly informative ones with probability tending to 1. This result is particularly general in that it is established  without any   model specification.  The following theorem shows that  $\widehat{f}_0(\bx)$ achieves a fast convergence rate in term of the infinity norm, where $\widehat{f}_0(\bx)$ is obtained by the  standard kernel ridge regression (\ref{eqn:distmin}) based on the selected covariates in $\widehat{\cal A}_{v_n}$.

\begin{theorem}\label{cor: cor 1}
	Suppose the assumptions of Lemma \ref{thm3} are satisfied and denote the probability $\Pr(\widehat{\cal A}_{v_n}\neq {\cal A}^*)= \Delta_m$. If $\lambda={m^{-\frac{1}{2r+1}}}$, then with probability at least $1-\delta_m-\Delta_m$, there holds
	\begin{eqnarray}
		\|\widehat{f}_0 -f^* \|_{K}   &\leq& c_{m, 2} c_{p_0,q}  {\log\Big(  \frac{4}{\delta_m}\Big)} { m^{-\Theta}  } ,\notag
	\end{eqnarray}
	where $p_0={\vert}{\cal A}^*{\vert}$, 
	\begin{eqnarray}
		c_{m, 2}&=&4\max\{\kappa^2_{2,p_0}, \kappa^2_{2,p_0}\|f^*\|_{K}, \|f^*\|^2_{K}\}\notag \\ 
		&&\times\max\{3\kappa_{1,p_0},  2\sqrt{2}\kappa_{2,p_0}^2, \|L_K^{-r}f^*\|_2\},\notag
	\end{eqnarray} 
	and $c_{p_0, q} =\max \left\{ \kappa_{1,p_0}\|f^*\|_K, q^{-1}\left(\log\frac{4c_1 m}{\delta_m}\right) \right\}$.\\
	Additionally, if we take $r=1$, and assume that $p_0=O(1)$, $\epsilon$ has sub-Gaussian or sub-exponential tail and  by Assumption 4, we have
	$
	\left \|\widehat{f}_0 -f^* \right \|_{\infty}   =O_{\mathbb{P}}(  n^{-\frac{1}{6}}\log n  ).
	$
	
\end{theorem}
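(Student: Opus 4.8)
The plan is to separate the two sources of randomness. By Lemma~\ref{thm3} the estimated active set satisfies $\widehat{\mathcal{A}}_{v_n}=\mathcal{A}^*$ with probability at least $1-\Delta_m$, so I would first condition on this selection event. On it, the refitted $\widehat{f}_0$ is exactly the kernel ridge regression solution of (\ref{eqn:distmin}) restricted to the $p_0=|\mathcal{A}^*|$ truly informative covariates, so every kernel-dependent constant may be taken to depend on the reduced problem only through $\kappa_{1,p_0},\kappa_{2,p_0},\|f^*\|_K$ and $\|L_K^{-r}f^*\|_2$. The stated inequality then holds on the intersection of this selection event with a concentration event of probability at least $1-\delta_m$, and a union bound produces the $1-\delta_m-\Delta_m$ in the statement.

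For the core RKHS-norm estimate I would use the bias--variance decomposition through the regularised population target $f_\lambda=(L_K+\lambda I)^{-1}L_K f^*$, writing $\widehat{f}_0-f^*=(\widehat{f}_0-f_\lambda)+(f_\lambda-f^*)$. The approximation term is purely spectral: under Assumption~1 we may write $f^*=L_K^{r}h$ with $\|h\|_2=\|L_K^{-r}f^*\|_2$, and since $\|g\|_K=\|L_K^{-1/2}g\|_2$ on $\mathcal{H}_K$ one obtains $\|f_\lambda-f^*\|_K=\lambda\,\|(L_K+\lambda I)^{-1}L_K^{\,r-1/2}h\|_2\lesssim \lambda^{r-1/2}\|L_K^{-r}f^*\|_2$, which equals $m^{-\Theta}\|L_K^{-r}f^*\|_2$ once $\lambda=m^{-1/(2r+1)}$ is inserted; this is exactly where $r>1/2$ is needed and where $\Theta=(2r-1)/\{2(2r+1)\}$ is produced. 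The sample term I would control with the operator-concentration machinery already behind Lemma~\ref{thm1}, bounding the preconditioned empirical operator $\|(L_K+\lambda I)^{-1/2}(\widehat{L}_K-L_K)(L_K+\lambda I)^{-1/2}\|$ and the noise functional $\|(L_K+\lambda I)^{-1/2} m^{-1}\sum_i\epsilon_i K_{\bx_i}\|_K$ by Hilbert-space Bernstein-type inequalities. The heavy-tailed noise enters through Assumption~1: a union bound gives $\max_{i\le m}|\epsilon_i|\le q^{-1}(\log(4c_1m/\delta_m))$ with probability at least $1-\delta_m/4$, which is precisely the quantity inside $c_{p_0,q}$ and explains the constant $4$. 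With $\lambda=m^{-1/(2r+1)}$ the variance term decays no slower than $m^{-\Theta}$, so both pieces are absorbed into the common bound $c_{m,2}\,c_{p_0,q}\,\log(4/\delta_m)\,m^{-\Theta}$, where $c_{m,2}$ collects the kernel constants (mirroring $c_m$ of Lemma~\ref{thm1}) and the $\max$ in $c_{p_0,q}$ simultaneously covers the approximation constant $\kappa_{1,p_0}\|f^*\|_K$ and the noise constant.

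For the asymptotic assertion I would pass from the RKHS norm to the sup-norm via the reproducing property and Assumption~2: for every $\bx$, $|\widehat{f}_0(\bx)-f^*(\bx)|=|\langle \widehat{f}_0-f^*,K_{\bx}\rangle_K|\le \|K_{\bx}\|_K\,\|\widehat{f}_0-f^*\|_K\le \kappa_{1,p_0}\|\widehat{f}_0-f^*\|_K$. Taking $r=1$ gives $\Theta=1/6$; the assumption $p_0=O(1)$ makes $\kappa_{1,p_0}$ and $c_{m,2}$ of order one; and Assumption~4 forces $m\asymp n$ in probability, turning $m^{-1/6}$ into $n^{-1/6}$. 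To obtain the $\log n$ factor with a single power I would fix $\delta_m$ at a small constant, so that $\log(4/\delta_m)=O(1)$, rather than letting it vanish; then $c_{p_0,q}=O(q^{-1}(\log n))$, which is $O(\log n)$ in the sub-exponential case and $O(\sqrt{\log n})$ in the sub-Gaussian case, the former being binding. Since $\Delta_m\to0$ by Lemma~\ref{thm3} and $\delta_m$ may be taken arbitrarily small, the high-probability bound upgrades to $\|\widehat{f}_0-f^*\|_\infty=O_{\mathbb{P}}(n^{-1/6}\log n)$.

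The main obstacle is the variance term of the first part: establishing concentration of the preconditioned empirical operator and of the noise functional in the \emph{RKHS} norm under only a $q$-exponential tail, rather than the usual bounded-response assumption, while keeping every constant expressed through the $p_0$-dependent quantities that survive after selection. This is the step that genuinely reuses and adapts the estimates underlying Lemma~\ref{thm1}; by contrast, the spectral approximation bound and the sup-norm conversion via the reproducing property are comparatively routine.
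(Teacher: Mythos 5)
Your proposal follows essentially the same route as the paper: condition on the selection event $\{\widehat{\mathcal{A}}_{v_m}=\mathcal{A}^*\}$ from Lemma~\ref{thm3}, combine probabilities to obtain $1-\delta_m-\Delta_m$, apply the kernel ridge regression RKHS-norm bound on the reduced $p_0$-dimensional problem (which the paper delegates to Proposition~\ref{prop1} and \citet{He20192}, and which you correctly sketch via the bias--variance split through $f_\lambda$ and operator concentration), and finally pass to the sup-norm via the reproducing property with $\kappa_{1,p_0}$ and use Assumption~4 to replace $m$ by $n$. The argument is correct and matches the paper's proof in structure and in every constant-tracking detail.
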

Theorem \ref{cor: cor 1} establishes the  uniform  convergence rate of the refitted estimator $\widehat{f}_0(\bx)$, and it plays a crucial role to establish the central limit theory of the AIPW estimator in (\ref{eq: final est}). The required tail behavior of $\epsilon$ in Theorem \ref{cor: cor 1} is to quantify $q^{-1}(\cdot)$ explicitly for simplicity, and it can be extended to any error distribution satisfying Assumption 1.

\begin{lemma}\label{cor: 1}
	Suppose  Assumptions 4--7 are satisfied. Then, given $\widehat{\bbeta}_{\lambda_2}$, there holds
	$
	\E[\lvert\widehat{\pi}(\bX) - \pi^*(\bX)\rvert^2] = O_{\mathbb{P}}(n^{-2/3-2\zeta}).
	$
\end{lemma}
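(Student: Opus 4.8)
The plan is to prove the claim in two stages: first obtain the prediction-error rate (\ref{eq: append 1}) for the linear predictor $\eta_{\widehat{\bbeta}_{\lambda_2}}$ from the group lasso theory of \citet{Meier2008}, and then transfer that rate to the propensity score itself through the Lipschitz continuity of the logistic link.

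First I would invoke the oracle inequality for the group lasso logistic estimator of \citet{Meier2008}. The objective in (\ref{eq: logsitic28}) is exactly their penalized negative log-likelihood, so the task is to check that Assumptions 5--7 map onto their hypotheses: Assumption 5 supplies the eigenvalue/compatibility condition on $\E(\bX\bX^\top)$, Assumption 6 supplies the uniform design bound $\max_{\bx}\max_g \bx_g^\top\bx_g \le nL_n^2$, and Assumption 7 fixes $N_0 = O(1)$ active groups with $\max_g \mbox{df}_g = O(1)$, the penalty level $\lambda_2 \asymp \log(G)$, and the trade-off $L_n^2 = O\{1/\log(G)\}$, $\log(G) = o(n^{1/3-2\zeta})$. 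Their theorem then yields a bound on the excess risk (the symmetrized Kullback--Leibler divergence) of $\widehat{\bbeta}_{\lambda_2}$. Assumption 4, which keeps $\pi^*(\bx)$ inside $(\kappa,1-\kappa)$, guarantees the variance function $\pi^*(1-\pi^*)$ is bounded away from zero, so a second-order Taylor expansion of the log-likelihood about $\bbeta^*$ produces a quadratic margin condition: the excess risk lower-bounds a constant multiple of $\E\{|\eta_{\widehat{\bbeta}_{\lambda_2}}(\bX) - \eta_{\bbeta^*}(\bX)|^2\}$. Combining the excess-risk upper bound with this quadratic lower bound and inserting the growth rates of Assumption 7 delivers (\ref{eq: append 1}), namely $\E\{|\eta_{\widehat{\bbeta}_{\lambda_2}}(\bX) - \eta_{\bbeta^*}(\bX)|^2\} = O_{\mathbb{P}}(n^{-2/3-2\zeta})$.

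The second stage is elementary. Writing $\widehat{\pi}(\bx) = \sigma(\eta_{\widehat{\bbeta}_{\lambda_2}}(\bx))$ and $\pi^*(\bx) = \sigma(\eta_{\bbeta^*}(\bx))$ with $\sigma(z) = (1+e^{-z})^{-1}$ the inverse logit, the derivative $\sigma'(z) = \sigma(z)\{1-\sigma(z)\} \le 1/4$ for every $z$, so the mean value theorem gives the pointwise bound $|\widehat{\pi}(\bX) - \pi^*(\bX)| \le \tfrac14 |\eta_{\widehat{\bbeta}_{\lambda_2}}(\bX) - \eta_{\bbeta^*}(\bX)|$. Squaring, taking expectation over $\bX$ conditional on $\widehat{\bbeta}_{\lambda_2}$, and applying (\ref{eq: append 1}) yields
\begin{equation*}
\E[|\widehat{\pi}(\bX) - \pi^*(\bX)|^2] \le \tfrac{1}{16}\,\E\{|\eta_{\widehat{\bbeta}_{\lambda_2}}(\bX) - \eta_{\bbeta^*}(\bX)|^2\} = O_{\mathbb{P}}(n^{-2/3-2\zeta}),
\end{equation*}
which is the desired conclusion.

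The main obstacle lies entirely in the first stage: verifying that Assumptions 4--7 reproduce the precise hypotheses of the \citet{Meier2008} oracle inequality and then propagating the constants and growth conditions so that the excess-risk bound collapses to exactly the exponent $-2/3-2\zeta$. In particular one must confirm that the quadratic margin condition holds uniformly on the relevant neighborhood of $\bbeta^*$ rather than only infinitesimally, which is where Assumption 4 is indispensable, and that $N_0 = O(1)$ together with $\max_g \mbox{df}_g = O(1)$ keeps the effective dimension of the active set bounded while the scalings $\lambda_2 \asymp \log(G)$ and $L_n^2 = O\{1/\log(G)\}$ absorb the growth of $G$. By contrast, the passage from $\eta$ to $\pi$ is routine given the global $1/4$-Lipschitz property of $\sigma$ and needs no additional conditions.
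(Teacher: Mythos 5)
Your proposal follows essentially the same route as the paper's own proof: the paper likewise obtains the rate $\E\{|\eta_{\widehat{\bbeta}_{\lambda_2}}(\bX) - \eta_{\bbeta^*}(\bX)|^2\} = O_{\mathbb{P}}(n^{-2/3-2\zeta})$ by directly appealing to \citet{Meier2008} under Assumptions 4--7, and then transfers it to the propensity scores via the Lipschitz continuity of the inverse logit (the paper uses the cruder Lipschitz constant $1$ where you use $1/4$, which is immaterial). Your additional commentary on how the assumptions map onto the hypotheses of \citet{Meier2008} is a useful elaboration but does not change the argument.
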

Lemma~\ref{cor: 1} establishes the convergence rate of the estimated response probability using group lasso logistic regression. By Lemma~\ref{cor: 1}, we essentially require that the estimated response probability should be at least consistent. A similar requirement is also discussed by \citet{Tan2020}. Specifically, \citet{Tan2020} assumed a correctly specified response model in order to achieve valid interval estimator. If other penalized logistic regression estimators are considered, Assumptions~4--7 should be replaced in order to guarantee Lemma~\ref{cor: 1}.

By Theorem~\ref{cor: cor 1} and Lemma~\ref{cor: 1}, we can  validate the following central limit theorem for the AIPW estimator in (\ref{eq: final est}). 
\begin{theorem}\label{thm:clt}
	Suppose all the assumptions in Theorem~\ref{cor: cor 1} and Lemma~\ref{cor: 1} are satisfied. If $\E\{\lvert f^*(\bX) + \delta\pi^*(\bX)^{-1}\{Y - f^*(\bX)\}\rvert^2\}<\infty$, then
	$
	\sqrt{n}(\widehat{\theta}_{AIPW}-\theta^*)\to \rm{N}(0,\sigma^2),
	$
	in distribution, where $\pi^*(\bX) = \Pr(\delta=1\mid \bX)$ and $\sigma^2 = \Var{[f^*(\bX) + \delta\pi^*(\bX)^{-1}\{Y-f^*(\bX)\}]}$. 
\end{theorem}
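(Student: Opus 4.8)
The plan is to use the influence-function (oracle) expansion that is standard for augmented inverse probability weighting. Set $\psi(\bx,y,\delta;f,\pi)=f(\bx)+\delta\,\pi(\bx)^{-1}\{y-f(\bx)\}$ and write $\psi_i^*=\psi(\bx_i,y_i,\delta_i;f^*,\pi^*)$. First I would check that $\psi_1^*,\dots,\psi_n^*$ are i.i.d.\ with mean $\theta^*$ and variance $\sigma^2$: by MAR~(\ref{eq: MAR 1}) we have $\delta\perp Y\mid\bX$ with $\E[\delta\mid\bX]=\pi^*(\bX)$ and $\E[Y\mid\bX]=f^*(\bX)$, so $\E[\delta\pi^*(\bX)^{-1}\{Y-f^*(\bX)\}]=0$ and hence $\E[\psi_i^*]=\E[f^*(\bX)]=\theta^*$, while the stated second-moment hypothesis makes $\sigma^2$ finite. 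The Lindeberg--L\'evy CLT then gives that $n^{-1/2}\sum_{i=1}^n(\psi_i^*-\theta^*)$ converges in distribution to $N(0,\sigma^2)$. It therefore suffices to prove that the remainder $R_n:=\widehat\theta_{AIPW}-n^{-1}\sum_i\psi_i^*=n^{-1}\sum_i D_i$ obeys $\sqrt n\,R_n=o_{\mathbb{P}}(1)$, where $D_i=\psi(\bx_i,y_i,\delta_i;\widehat f_0,\widehat\pi)-\psi_i^*$; note the theorem concerns exactly the $n^{-1}$-normalized estimator~(\ref{eq: final est}), so no renormalization is needed.

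Writing $a_i=\widehat f_0(\bx_i)-f^*(\bx_i)$, $b_i=\widehat\pi(\bx_i)-\pi^*(\bx_i)$ and $\epsilon_i=y_i-f^*(\bx_i)$, a direct computation gives
\[
D_i=\underbrace{a_i\Big(1-\tfrac{\delta_i}{\pi^*(\bx_i)}\Big)}_{\text{(I)}}+\underbrace{\frac{\delta_i\,a_i\,b_i}{\pi^*(\bx_i)\widehat\pi(\bx_i)}}_{\text{(II)}}-\underbrace{\frac{\delta_i\,\epsilon_i\,b_i}{\pi^*(\bx_i)\widehat\pi(\bx_i)}}_{\text{(III)}}.
\]
Term~(II) is the genuine product of the two nuisance errors. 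Since $\widehat\pi$ is bounded away from $0$ and $1$ with probability tending to one (consistency of the logistic fit together with compactness of ${\cal X}$ and Assumption~4), Cauchy--Schwarz gives $|n^{-1}\sum_i\text{(II)}|\le C(n^{-1}\sum_i a_i^2)^{1/2}(n^{-1}\sum_i b_i^2)^{1/2}$. By Theorem~\ref{cor: cor 1} the first factor is $O_{\mathbb{P}}(n^{-1/6}\log n)$, and by Lemma~\ref{cor: 1} the empirical error $n^{-1}\sum_i b_i^2$ is $O_{\mathbb{P}}(n^{-2/3-2\zeta})$, so the product is $O_{\mathbb{P}}(n^{-1/2-\zeta}\log n)=o_{\mathbb{P}}(n^{-1/2})$, using $n^{-\zeta}\log n\to0$ from Assumption~7.

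For term~(III) the crucial observation is that $\widehat\pi$ is a function of $\{(\bx_j,\delta_j)\}_{j=1}^n$ only and never uses the responses. Conditioning on $\mathcal F=\sigma(\{(\bx_j,\delta_j)\}_{j=1}^n)$ therefore freezes $b_i,\widehat\pi(\bx_i),\delta_i$ and leaves only $\epsilon_i$ random; MAR then yields $\E[\epsilon_i\mid\mathcal F]=\E[\epsilon_i\mid\bx_i]=0$ with $\Var(\epsilon_i\mid\bx_i)$ bounded, so $\E[n^{-1}\sum_i\text{(III)}\mid\mathcal F]=0$ and, by conditional independence across $i$, $\Var(n^{-1}\sum_i\text{(III)}\mid\mathcal F)\le C\,n^{-1}\cdot n^{-1}\sum_i b_i^2=O_{\mathbb{P}}(n^{-5/3-2\zeta})$. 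A conditional Chebyshev bound then gives $n^{-1}\sum_i\text{(III)}=O_{\mathbb{P}}(n^{-5/6-\zeta})=o_{\mathbb{P}}(n^{-1/2})$.

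The hard part is term~(I), the product of the imputation error $a_i$ with the mean-zero weight $1-\delta_i/\pi^*(\bx_i)$: because both $\widehat f_0$ and the $\delta_i$ are built from the same sample one cannot simply take expectations, and the crude bound $|n^{-1}\sum_i\text{(I)}|\le\kappa^{-1}\|\widehat f_0-f^*\|_\infty$ only gives the insufficient rate $O_{\mathbb{P}}(n^{-1/6}\log n)$. My plan is to exploit the RKHS geometry: since $\widehat f_0,f^*\in{\cal H}_K$, the reproducing property $h(\bx_i)=\langle h,K_{\bx_i}\rangle_K$ gives
\[
\frac1n\sum_{i=1}^n\text{(I)}=\Big\langle\,\widehat f_0-f^*,\;V_n\Big\rangle_K,\qquad V_n:=\frac1n\sum_{i=1}^n\Big(1-\tfrac{\delta_i}{\pi^*(\bx_i)}\Big)K_{\bx_i},
\]
so Cauchy--Schwarz in ${\cal H}_K$ yields $|n^{-1}\sum_i\text{(I)}|\le\|\widehat f_0-f^*\|_K\,\|V_n\|_K$. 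Because $\E[\,1-\delta_i/\pi^*(\bx_i)\mid\bx_i]=0$ and the samples are independent, all cross terms vanish and $\E\|V_n\|_K^2=n^{-2}\sum_i\E[(1-\delta_i/\pi^*(\bx_i))^2K(\bx_i,\bx_i)]\le\kappa^{-2}\kappa_{1,p}^2\,n^{-1}$, whence $\|V_n\|_K=O_{\mathbb{P}}(n^{-1/2})$. Combining this with the \emph{RKHS-norm} rate $\|\widehat f_0-f^*\|_K=O_{\mathbb{P}}(n^{-1/6}\log n)$ from Theorem~\ref{cor: cor 1} gives $n^{-1}\sum_i\text{(I)}=O_{\mathbb{P}}(n^{-2/3}\log n)=o_{\mathbb{P}}(n^{-1/2})$. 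It is precisely this step that forces the use of the RKHS-norm (rather than merely $L^\infty$) convergence of $\widehat f_0$, and that lets us dispense with sample splitting. Collecting (I)--(III) gives $\sqrt n\,R_n=o_{\mathbb{P}}(1)$, which together with the leading CLT establishes the claim.
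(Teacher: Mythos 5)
Your overall architecture matches the paper's: the same three-term decomposition of $\widehat{\theta}_{AIPW}$ into the oracle sum $n^{-1}\sum_i[f^*(\bx_i)+\delta_i\{\pi^*_i\}^{-1}\epsilon_i]$ (handled by the Lindeberg--L\'evy CLT exactly as the paper does), plus the two cross terms, and your term (II) is controlled by the same Cauchy--Schwarz-type product of the two nuisance rates that the paper uses. Where you genuinely diverge is in the treatment of terms (I) and (III), and your route is the more careful one. The paper disposes of (I) by pairing the uniform bound $\max_i\lvert\widehat{f}_0(\bx_i)-f^*(\bx_i)\rvert=O_{\mathbb{P}}(n^{-1/6}\log n)$ with $n^{-1}\sum_i\{1-\delta_i/\pi^*_i\}=O_{\mathbb{P}}(n^{-1/2})$, and of (III) by pairing $\max_i\lvert(\pi^*_i-\widehat{\pi}_i)/\widehat{\pi}_i\rvert=o_{\mathbb{P}}(1)$ with $n^{-1}\sum_i\delta_i\epsilon_i/\pi^*_i=O_{\mathbb{P}}(n^{-1/2})$; as literally written this multiplies the rate of a signed \emph{sum} by a uniform bound on the \emph{factors}, which only justifies a bound on the sum of products after an additional argument that the paper leaves implicit. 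Your RKHS device for (I) --- writing $n^{-1}\sum_i(1-\delta_i/\pi^*_i)\{\widehat{f}_0(\bx_i)-f^*(\bx_i)\}=\langle\widehat{f}_0-f^*,V_n\rangle_K$ and showing $\E\|V_n\|_K^2\leq\kappa^{-2}\kappa_{1,p_0}^2n^{-1}$ because the conditionally mean-zero weights kill the cross terms --- supplies exactly the missing step, legitimately exploits the $K$-norm (not just sup-norm) rate from Theorem~\ref{cor: cor 1}, and correctly sidesteps the dependence between $\widehat{f}_0$ and the $\delta_i$ without sample splitting; your conditional Chebyshev argument for (III), using that $\widehat{\pi}$ never touches the responses so $\E[\epsilon_i\mid\mathcal{F}]=0$, plays the analogous role there. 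The one looseness you share with the paper is the passage from the out-of-sample bound of Lemma~\ref{cor: 1}, $\E[\lvert\widehat{\pi}(\bX)-\pi^*(\bX)\rvert^2\mid\widehat{\bbeta}_{\lambda_2}]=O_{\mathbb{P}}(n^{-2/3-2\zeta})$, to the in-sample average $n^{-1}\sum_i b_i^2$, since $\widehat{\bbeta}_{\lambda_2}$ is fit on the same $\bx_i$; a short additional step (e.g., via Lemma~\ref{lemma: A1 beta dif} and control of the empirical Gram matrix) would be needed to make either version fully airtight, and for your term (III) even a crude $o_{\mathbb{P}}(1)$ bound on $n^{-1}\sum_i b_i^2$ would already suffice.
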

It is worthy pointing out that the derived result is particularly attractive given the fact that the central limit theorem is built by nonparametric estimation of $f^*(\bx)$ with  diverging dimension, and to our knowledge, such a result is novel in literature. More importantly, the variance  term $\sigma^2$ can be estimated by the sample variance of $\{\widehat{f}_0(\bx_i) + \delta_i\widehat{\pi}(\bx_i)^{-1}\{y_i-\widehat{f}_0(\bx_i)\}:i=1,\ldots,n\}$:
\begin{equation*}
	\widehat{\sigma}_{AIPW}^2 = \frac{1}{n-1}\sum_{i=1}^n(\widehat{y}_i-\widehat{\theta}_{AIPW})^2,\label{eq: 3.1}
\end{equation*}
where $\widehat{y}_i = \widehat{f}_0(\bx_i) + \delta_i\widehat{\pi}(\bx_i)^{-1}\{y_i-\widehat{f}_0(\bx_i)\}$. Thus, based on Theorem~\ref{thm:clt} and the estimated variance $\widehat{\sigma}_{AIPW}^2$, we can also obtain the interval estimators of $\theta^*$.

\section{Numerical analysis	}
\label{others}

In this section, we compared the numerical performance of the proposed AIPW estimator, denoted as Prop, against several state-of-the-art competitors under two simulated experiments and a real-data application. For Prop,	in all the scenarios,  we applied a Gaussian kernel, $K(\bx,\bu) = \exp{\left(-{\|\bx-\bu\|_2^2}/(2\sigma_n^2)\right)}$ with $\sigma_n$ being the median of all the pairwise distances among the covariates \citep{Jaakkola1999}. As suggested by \citet{He20192}, we also applied the  stability-based selection criterion \citep{SunWW2013} to determine the thresholding value $v_n$ and set the ridge parameter $\lambda_n=0.001$ for the employed sparse learning algorithm.

\subsection{Simulated experiments}\label{sec: synthetic}
In this section, we considered  $n\in\{800,1\,000\}$ and $p\in\{400,2\,000\}$, and  covariates were generated by $x_{il}\sim\mbox{U}(-0.5,0.5)$ for $i=1,\ldots,n$ and $l=1,\ldots,p$, where  $x_{il}$ denoted the $l$th element of $\bx_i$, and $\mbox{U}(-0.5,0.5)$ denoted a uniform distribution over $[-0.5,0.5]$.  The following regression models were applied to generate the response of interest:
\begin{enumerate}
	\item[M1.] Linear regression model: $y_i = 5x_{i1} + 6x_{i2}+ 4x_{i3}+ 4x_{i4} + \epsilon_i$ with $\epsilon_i\sim \rm{N}(0,1)$. 
	\item[M2.] Nonlinear regression model: $y_i =6x_{i1} + 4(2x_{i2}+1)(2x_{i3}-1)  + 6h(x_{i4})+ 5\sin(x_{i5}\pi)/\{2-\sin(x_{i5}\pi)\} +  \epsilon_i$, where $h(x) = 0.1\sin(x_{i4}\pi) + 0.2\cos(x_{i4}\pi) + 0.3\sin(x_{i4}\pi)^2 + 0.4\cos(x_{i4}\pi)^3+ 0.5\sin(x_{i4}\pi)^3$, and $\epsilon_i\sim \rm{N}(0,1)$. 
\end{enumerate}
For $i=1,\ldots,n$, the response indicator $\delta_i$ was generated by a Bernoulli distribution with success probability $\pi^*(\bx_i)$, which was obtained by the following models:
\begin{enumerate}
	\item[R1.] Logistic response model: $\mbox{logit}\{\pi^*(\bx_i)\} = -0.1 + 2x_{i1} + 2x_{i3}$.
	\item[R2.] Multi-modal response model  $\pi^*(\bx_i) = \sin(6x_{i2} + 8x_{i4})/3+0.5$.
\end{enumerate}
The linear regression model M1 is commonly assumed in practice \citep{Fan2001}. The nonlinear regression model M2, however, is more complex, and the interaction effect is also taken into consideration. The logistic response model R1 is  widely used in practice. However, the response model R2 violates (\ref{eq: response prob}), so it is used to test the robustness of the proposed AIPW estimator.

The primary interest was to estimate $\theta^* = \E(Y)$.  For the regression model M1, we had $\theta^*=0$. However, instead of deriving $\theta^*$ analytically, we used $\tilde{\theta} = L^{-1}\sum_{l=1}^Ly_{l}$ as the ``true value'' for the regression model M2, where $\{y_l:l=1,\ldots,L\}$ was a random sample of size $L=1\,000\,000$. The following competitors were considered:
\begin{enumerate}
	\item[CC.] The sample mean of the complete cases, $\widehat{\theta}_{cc}=m^{-1}\sum_{i=1}^n\delta_iy_i$, where $m = \sum_{i=1}^n\delta_i$.
	\item[PS.] Conventional propensity score  estimator
	$
	\widehat{\theta}_{ps} = {n}^{-1}\sum_{i=1}^n\delta_i\pi^{-1}(\bx_i;\widehat{\bbeta})y_i,
	$
	where  $\widehat{\bbeta}^\top = (\widehat{\beta}_0,\widehat{\bbeta}_1^\top)$ solves 
	$
	\sum_{i=1}^n\{\delta_i-\pi(\bx_i;\widehat{\bbeta})\}(1,\bx_i^\top)=0
	$ without consideration of the sparsity.
	\item[DI.] Deterministic imputation using kernel ridge regression \citep{wangkim2020}
	$
	\widehat{\theta}_{di} = {n}^{-1}\sum_{i=1}^n\{\delta_iy_i + (1-\delta_i)\widehat{f}(\bx_i)\},
	$
	where $\widehat{f}(\bx)$ is the fitted kernel ridge regression model based on the fully observed data $\{(\bx_i,y_i):\delta_i=1\}$ without employing sparse learning.
	\item[NAIPW.] Naive AIPW estimator
	$
	\widehat{\theta}_{AIPW1}={n}^{-1}\sum_{i=1}^n\{\widehat{f}(\bx_i) + \delta_i\pi^{-1}(\bx_i;\widehat{\bbeta})\{y_i - \widehat{f}(\bx_i)\}\},\label{eq: DR1}
	$
	where $\widehat{\bbeta}$ is the same as that in the  PS estimator without consideration of the sparsity, and $\widehat{f}(\bx)$ is the same as that in the DI estimator without employing sparse learning.
\end{enumerate}
The CC estimator completely ignore the unobserved data, leading to a biased estimator if $\E (\delta_i\mid \bx_i)$ involves covariates used in the regression model. PS estimator  is widely used in causal inference \citep{rosenbaum1983central} and missing data analysis \citep{WOOLDRIDGE20071281}. The imputation methods are commonly used to provide a complete dataset, especially in survey sampling; see \citet[Chapter~4]{kim2013statistical} for details. Recently, \citet{wangkim2020} has proposed a kernel-based deterministic imputation method, and we consider their method for comparison as well. Except for the proposed AIPW estimator, we also considered the naive AIPW estimator based on the conventional propensity score estimator and the deterministic imputation estimator.  

We conducted $M=500$ Monte Carlo simulations for each estimator under different model setups. First, we compared different estimators in terms of  the Monte Carlo bias and the Monte Carlo standard error: 
\begin{align*}
	\mbox{Bias} &= \bar{\theta}_n^{(M)} - \theta,\\ \mbox{SE}&=\left\{\frac{1}{M-1}\sum_{m=1}^M(\widehat{\theta}_n^{(m)}-\bar{\theta}_n^{(M)})^2\right\}^{1/2},
\end{align*}
where $\bar{\theta}_n^{(M)}=M^{-1}\sum_{m=1}^M\widehat{\theta}_n^{(m)}$, $\widehat{\theta}_n^{(m)}$ was a specific estimator of $\theta$ for the $m$th Monte Carlo simulation. Simulation results were summarized in Table~\ref{tab: BIAS and SE}. The CC estimator is biased since the response of interest $y_i$ is correlated with the response index $\delta_i$. Even though a logistic model is correctly specified for the response model R1, the PS estimator is still biased or even unrealistic due to the curse of dimensionality. Since the NAIPW estimator was obtained using the same response model as the PS estimator, it is also questionable, especially when the sample size is small and the number of useless covariates is large. Since the response probability was not used by the DI estimator, it does not suffer the same problem as the PS estimator. However, even under the linear regression model, the bias of the DI estimator may not be negligible compared with its standard error.  Compared with its competitors, the proposed AIPW estimator performs the best since it has the smallest bias under most model setups, and its standard error is reasonably small.

% The following table is generated by Analysissmall_test_v5.R on the server
\begin{table}[h]%The following table is generated by Analyze_v2_all_estimators.R in the cluster.
	\centering
	\caption{Summary of the Monte Carlo bias (Bias) and standard error  (SE) corresponding to the five estimators under different model setups, and the unit is 0.1.  For ``Model'',  ``C1--C4'' represent (M1, R1),(M2, R1), (M2, R2) and (M2, R2), respectively. For ``Size'', ``I--IV'' corresponds to $(n,p) = (800,400)$, $(1\,000,400)$, $(800,2\,000)$, and $(1\,000,2\,000)$, respectively. Notation ``-'' is used when the absolute value of either bias or standard error is greater than $100$. }\label{tab: BIAS and SE}
	\begin{tabular}{@{}c@{\hskip 0.1in}c@{\hskip 0.1in}r@{\hskip 0.1in}r@{\hskip 0.1in}r@{\hskip 0.1in}r@{\hskip 0.1in}r@{}}
		\hline
		Model&Size&CC&PS&DI&NAIPW&Prop\\
		\hline
		& I & 6.9 (1.5)& - (-)& 0.9 (1.1)& - (-)& { 0.3} (1.2) \\ 
		& II & 7.0 (1.3)& - (-)& 0.9 (1.0)& - (-)& 0.1 (1.0) \\ 
		& III & 7.0 (1.5)& 3.4 (0.7)& 1.1 (1.2)& 1.1 (1.2)& 0.4 (1.3) \\ 
		\multirow{-4}{*}{C1}& IV & 6.9 (1.3)& 3.3 (0.6)& 1.0 (1.0)& 1.0 (1.0)& 0.2 (1.1)\\ \\
		& I & -0.7 (2.8)& - (-)&  0.1 (2.0)& - (-)&  0.0 (2.0)\\ 
		& II & -0.8 (2.4)& - (-)& -0.1 (1.7)& - (-)& -0.1 (1.6)\\ 
		& III & -0.7 (2.6)&  2.5 (1.3)&  0.3 (2.1)&  0.3 (2.1)&  0.0 (2.0)\\ 
		\multirow{-4}{*}{C2}& IV & -0.6 (2.3)&  2.6 (1.1)&  0.0 (1.7)&  0.0 (1.7) & -0.1 (1.7)\\ \\
		& I & -1.2 (1.5)& - (-)& -0.3 (1.1)& - (-)& -0.2 (1.1)\\ 
		& II & -1.3 (1.3)&  0.3 (57.1)& -0.3 (1.0)&  0.2 (8.4)& -0.2 (1.0) \\ 
		& III & -1.2 (1.5)& -0.6 (0.7)& -0.2 (1.1)& -0.2 (1.1)& -0.1 (1.1) \\ 
		\multirow{-4}{*}{C3} & IV & -1.2 (1.3)& -0.6 (0.7)& -0.2 (1.0)& -0.2 (1.0) & -0.2 (1.0)  \\ \\
		& I & 1.7 (2.6)& - (-)& 0.2 (1.9)& - (-)& 0.0 (1.9)  \\ 
		& II &  1.7 (83.6)& -3.0 (2.2)&  0.2 (1.6)& -0.2 (17.9)&  0.0 (1.7)  \\ 
		& III & 1.8 (2.6)& 3.6 (1.3)& 0.3 (2.0)& 0.3 (2.0)& 0.1 (2.0) \\ 
		\multirow{-4}{*}{C4}& IV & 1.9 (2.2)& 3.6 (1.1)& 0.3 (1.6)& 0.3 (1.6)& 0.0 (1.6) \\ 
		\hline
	\end{tabular}
\end{table}

Next, the proposed AIPW  estimator was evaluated   by the relative bias of the variance estimator and its coverage rate of a 95\% confidence interval:
\begin{align*}
	\mbox{RB} &= \frac{\bar{\sigma}_n^{2(M)} - \mbox{SE}^2}{\mbox{SE}^2},\\
	\mbox{CR} &= \frac{1}{M}\sum_{m=1}^M \mathbb{I}(\widehat{\theta}_{n}^{(m)} - 1.96\widehat{\sigma}_n^{(m)}\leq \theta\leq \widehat{\theta}_{n}^{(m)} + 1.96\widehat{\sigma}_n^{(m)}),
\end{align*}
where  $\bar{\sigma}_n^{2(M)} = M^{-1}\sum_{i=1}^M\widehat{\sigma}_n^{2(m)}$,  $\widehat{\sigma}_n^{2(m)}$ is the   variance estimator for the $m$th Monte Carlo simulation, $\widehat{\sigma}_n^{(m)}$ is the square root of  $\widehat{\sigma}_n^{2(m)}$, and $\mathbb{I}(a\leq x\leq b)$ is an indicator function of $u$ for given $a\leq b$, and it takes value 1 if $u\in[a,b]$ and 0 otherwise. We conducted  500 Monte Carlo simulations, and  Table~\ref{tab: coverage rate} summarized the corresponding results. Generally, as the sample size increases from $n=800$ to $n=1\,000$, the relative bias of the variance estimator decreases, and it is  negligible if sample size $n=1\,000$. Thus, the proposed variance estimator performs well, especially when the sample size is large. The coverage rates are close to their nominal truth 0.95 when sample size is large. Since the variance of the proposed AIPW estimator  is under-estimated for  the setup with regression model $M1$ and response model $R1$, the corresponding coverage rate is  much lower than 0.95.  As the sample size increases, however,  the coverage rate gets closer to its nominal truth. For the two setups with response model R2,  a logistic regression model is  wrongly specified for the response indicator. However, the absolute values of the relative bias of the variance estimator are generally less than 0.05 and the corresponding coverage rates are close to the nominal truth 0.95, specially when the sample size is large.   Thus, the proposed AIPW estimator is indeed robust against a wrongly specified response model.   

% The following table is generated by Analysissmall_test_v5.R on the server 
\begin{table}[h]
	\centering
	
	\caption{Relative bias (RB) of the variance estimator and coverage rate (CR) of the 95\% confidence interval of $\theta$ for the proposed   method.  For ``Model'',  ``C1--C4'' represent (M1, R1),(M2, R1), (M2, R2) and (M2, R2), respectively.  ``I--IV'' corresponds to $(n,p) = (800,400)$, $(1\,000,400)$, $(800,2\,000)$, and $(1\,000,2\,000)$, respectively. }\label{tab: coverage rate}
	
	\begin{tabular}{@{}c@{\hskip 0.1in}rrrrrrrrr@{}}
		\hline
		& \multicolumn{4}{c}{RB} && \multicolumn{4}{c}{CR}\\ 
		\multirow{-2}{*}{Model}&I&II&III&IV&&I&II&III&IV\\
		\hline
		C1 & -0.10 & -0.07 & -0.33 & -0.20 & &  0.92 & 0.94 & 0.88 & 0.92 \\ 
		C2 & -0.13 & 0.01 & -0.14 & -0.01 & &  0.94 & 0.94 & 0.93 & 0.95 \\ 
		C3 & 0.04 & -0.02 & -0.03 & 0.05 & &  0.95 & 0.95 & 0.95 & 0.96 \\ 
		C4 & -0.10 & -0.01 & -0.14 & 0.05 & &  0.94 & 0.94 & 0.93 & 0.96 \\ 
		\hline
	\end{tabular}
\end{table}

\subsection{Application to a supermarket dataset}

In this section, the proposed AIPW estimator  and its competitors were applied to a supermarket dataset \citep{Wang2009}, which was collected from a major supermarket located in northern China, consisting of daily sale records of $p=6\,398$ products on $n=464$ days. This data included almost all kinds of daily necessities and the response of interest was the number of customers on each day, and the covariates are the daily sale volumes of each product. For simplicity, denote $y_i$ and $\bx_i=(x_{i1},\ldots,x_{ip})^{\top}$ to the be the response of interest and the corresponding covariate  for the $i$th day.  In this section, we were interested in estimating the average number of customers visiting the supermarket. This dataset was fully observed, and it was  studentized before analyzing. Thus, the sample mean $\widehat{\theta} = n^{-1}\sum_{i=1}^ny_i=0$  served as a benchmark.

To compare the performance of the proposed AIPW estimator with other competitors, we considered  the following missing mechanism for $y_i$:
\begin{eqnarray}
	\mbox{logit}\{\pi(\bx_i)\} = 1 - 0.6x_{i5} - x_{i6} + 0.5x_{i10},\label{eq: app 01}
\end{eqnarray}
and $y_i$ was treated as  observed if and only if $\delta_i=1$, where $\delta_i=1$ with probability $\pi_i$.  The  mechanism in (\ref{eq: app 01}) was MAR, and the corresponding covariates were identified as informative for estimating the response of interest by \citet{He20192}. Then, instead of observing the whole data, we assumed that all the covariates and only  $\{y_i:\delta_i=1\}$ were available, and the resulting response rate was about 0.70.  

We generated 500 incomplete datasets using (\ref{eq: app 01}) and compare the estimators in Section~\ref{sec: synthetic}.  Table~\ref{tab: application} summarized the average of the estimators and the corresponding standard error. 
The CC estimator is highly biased since it ignores the missing mechanism. The performance of the PS estimator is also questionable in that the response model using all covariates results in overfitting. The DI, NAIPW and  Prop  estimators outperform the CC and PS estimators since their estimates are much closer to 0. However, the standard error of the Prop estimator is much smaller than the other two, illustrating the superior of the proposed AIPW estimator.

% The following table is generated by real_evalutation_v1.R on the server
\begin{table}[ht]
	\centering
	\caption{The average and standard error of 500 incomplete  datasets for estimating the  number of customers visiting the supermarket. Since studentized is applied, the sample mean 0 serves as the benchmark. }\label{tab: application}
	\begin{tabular}{@{}crrrrr@{}}
		\hline
		&CC&PS&DI&NAIPW&Prop\\
		\hline
		Estimate& -0.20&-0.13& -0.04 & -0.04 & -0.04\\
		Standard error &0.03 & 0.02& 0.03&0.03&0.01\\
		\hline
	\end{tabular}
\end{table}

	\section{Conclusion}
\label{Conclude}
In this paper, we propose a novel AIPW estimator to infer the population mean, which incorporates an efficient nonparametric imputation with  sparse structure and a penalized propensity score estimator under the assumption of  missing at random. The proposed method is computationally efficient and allows the dimension diverging. More importantly, the estimation consistency as well as the corresponding central limit theorem are established under  regularity assumptions.  Its superior is also supported by several simulated examples and one application to a supermarket dataset.

\section*{Acknowledgement}
Xin He's research is supported by NSFC Grant No. 11901375; Xiaojun Mao's research is supported by NSFC Grant No. 12001109 and 92046021, the Science and Technology Commission of Shanghai Municipality grant 20dz1200600. Zhonglei Wang's research is supported by NSFC Grant No. 11901487 and 72033002, Fundamental Scientific Center of National Natural Science Foundation of China Grant No. 71988101.
\appendix

\section{Block co-ordinate gradient descent algorithm}	\label{append: BCGD ALG}

For $g$th group of $\bbeta_1$, consider a vector $\bd$ such that $\bd_k=0$ for $k\neq g$, and assume that the $\mbox{df}_g\times \mbox{df}_g$ submatrix is of the form $H_{gg}^{(t)} = h_g^{(t)}I_{\scriptsize\mbox{df}_g}$ for some scalar $h_g^{(t)}$, where $I_m$ is an $m\times m$ identity matrix.

If $\lVert \nabla l(\widehat{\bbeta}^{(t)})_g - h_g^{(t)}\widehat{\bbeta}^{(t)}_g\rVert_2\leq \lambda_2 \mbox{df}_g^{1/2}$, let $\bd_g^{(t)} = -\widehat{\bbeta}^{(t)}_g$. Otherwise, 
$$
\bd_g^{(t)} = -\frac{1}{h_g^{(t)}} \left\{  \nabla l(\widehat{\bbeta}^{(t)})_g  - \lambda_2 \mbox{df}_g^{1/2} \frac{  \nabla l(\widehat{\bbeta}^{(t)})_g  - h_g^{(t)}\widehat{\bbeta}^{(t)}_g}{\lVert \nabla l(\widehat{\bbeta}^{(t)})_g - h_g^{(t)}\widehat{\bbeta}^{(t)}_g\rVert_2}\right\}.
$$
If $\bd^{(t)}\neq 0$,  let $\bbeta^{(t+1)} = \bbeta^{(t)}  + \alpha^{(t)}\bd^{(t)} $, where $\alpha^{(t)}$ is the largest value among $\{\alpha_0\delta^l:l\geq0\}$ such that 
$$
S_{\lambda_2}(\bbeta^{(t)} + \alpha^{(t)}\bd^{(t)}) - S_{\lambda_2}(\bbeta^{(t)})\leq \alpha^{(t)}\sigma\Delta^{(t)},
$$
$\delta\in(0,1)$, $\sigma\in(0,1)$, $\alpha_0>0$, and 
$$
\Delta^{(t)}  = -\left(\bd^{(t)}\right)^{\top} \nabla l(\widehat{\bbeta}^{(t)}) + \lambda_2\mbox{df}_g^{1/2}\lVert \widehat{\bbeta}^{(t)}_g + \bd^{(t)}_g\rVert_2 - \lambda_2\mbox{df}_g^{1/2}\lVert \widehat{\bbeta}^{(t)}_g\rVert_2.
$$
See \citet{Meier2008}  for details.

\section{Proofs}
	\label{sec: proof app}

\begin{prop}\label{prop1}
	Suppose  Assumptions 1--2 are satisfied. Then, with probability at least $1-\delta_n/2$, there holds
	\begin{eqnarray}\label{thm12}
		\big \|\widehat{f}-f^* \big \|_{K} \leq 2\log\left(\frac{8}{\delta_n}\right) \Big [ 	\frac{3\kappa_{1,p}}{n^{1/2}\lambda_n} \left\{\kappa_{1,p}\|f^*\|_K+q^{-1}\left(\log\frac{4c_1 n}{\delta_n}\right)\right\} \notag\\
		+ \lambda_n^{r-{1}/{2}} \|L^{-r}_{K}f^*\|_2 \Big ].\notag
	\end{eqnarray}
\end{prop}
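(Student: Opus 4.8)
The plan is to follow the integral-operator approach to regularized least squares (as in \citet{Smale2007}) and split the error into a deterministic approximation term and a stochastic sample term. First I would recast the estimator in operator form: writing $T_n=n^{-1}\sum_{i=1}^n K_{\bx_i}\otimes K_{\bx_i}$ for the empirical covariance operator on $\mathcal H_K$ (where $K_{\bx}\otimes K_{\bx}$ denotes the rank-one map $f\mapsto \langle f,K_{\bx}\rangle_K K_{\bx}$) and $S_n^*\by=n^{-1}\sum_{i=1}^n y_i K_{\bx_i}$, the minimizer of (\ref{eqn:distmin}) is $\widehat f=(T_n+\lambda_n I)^{-1}S_n^*\by$. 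Introducing the regularized population target $f_{\lambda_n}=(L_K+\lambda_n I)^{-1}L_Kf^*$, I would decompose $\widehat f-f^*=(\widehat f-f_{\lambda_n})+(f_{\lambda_n}-f^*)$ and bound the two pieces separately.

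For the approximation term, $f_{\lambda_n}-f^*=-\lambda_n(L_K+\lambda_n I)^{-1}f^*$, so using the identity $\|g\|_K=\|L_K^{-1/2}g\|_2$ (which follows from the Mercer expansion recorded before the assumptions) together with Assumption~1, i.e.\ $f^*=L_K^r h$ with $\|h\|_2=\|L_K^{-r}f^*\|_2$, the spectral calculus reduces the bound to $\sup_{\mu>0}\lambda_n\mu^{r-1/2}/(\mu+\lambda_n)$. Since $r\in(1/2,1]$ this supremum is at most $\lambda_n^{r-1/2}$, yielding $\|f_{\lambda_n}-f^*\|_K\le \lambda_n^{r-1/2}\|L_K^{-r}f^*\|_2$, which is exactly the second term in the stated bound.

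For the sample term I would use $\|(T_n+\lambda_n I)^{-1}\|\le \lambda_n^{-1}$ together with the algebraic rearrangement $\widehat f-f_{\lambda_n}=(T_n+\lambda_n I)^{-1}\{(S_n^*\by-L_Kf^*)+(L_K-T_n)f_{\lambda_n}\}$, which follows from $(L_K+\lambda_n I)f_{\lambda_n}=L_Kf^*$. This gives $\|\widehat f-f_{\lambda_n}\|_K\le \lambda_n^{-1}\{\|S_n^*\by-L_Kf^*\|_K+\|L_K-T_n\|\,\|f_{\lambda_n}\|_K\}$, and I would note $\|f_{\lambda_n}\|_K\le\|f^*\|_K$ because $(L_K+\lambda_n I)^{-1}L_K$ has operator norm at most one. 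Both remaining quantities are averages of i.i.d.\ mean-zero Hilbert-space elements ($S_n^*\by-L_Kf^*=n^{-1}\sum_i\{y_iK_{\bx_i}-\E(yK_{\bx})\}$, using $\E(yK_{\bx})=L_Kf^*$, and $T_n-L_K=n^{-1}\sum_i\{K_{\bx_i}\otimes K_{\bx_i}-L_K\}$), so a Pinelis--Bernstein-type concentration inequality in $\mathcal H_K$ delivers the $n^{-1/2}$ rate. The summands are controlled by $\|K_{\bx_i}\|_K\le\kappa_{1,p}$ and $|y_i|\le\|f^*\|_\infty+|\epsilon_i|\le\kappa_{1,p}\|f^*\|_K+|\epsilon_i|$, the last step using the reproducing property. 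To handle the unbounded noise under Assumption~1, I would first show via a union bound over the $q$-exponential tail that $\max_{1\le i\le n}|\epsilon_i|\le q^{-1}(\log(4c_1n/\delta_n))$ with probability at least $1-\delta_n/4$; on this event $|y_i|\le\kappa_{1,p}\|f^*\|_K+q^{-1}(\log(4c_1n/\delta_n))$, producing the factor $\kappa_{1,p}\{\kappa_{1,p}\|f^*\|_K+q^{-1}(\cdot)\}$. Allocating $\delta_n/4$ to the noise event and $\delta_n/4$ to the Hilbert-space concentration, a union bound gives overall probability $1-\delta_n/2$, while the Pinelis constant $2\log(2/(\delta_n/4))=2\log(8/\delta_n)$ supplies the leading factor; collecting the two contributions to $\|\widehat f-f_{\lambda_n}\|_K$ absorbs into the factor $3\kappa_{1,p}$.

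The main obstacle is the stochastic step: obtaining a genuine $n^{-1/2}$ rate (rather than the crude $\lambda_n^{-1}\kappa_{1,p}\max_i|\epsilon_i|$ that the triangle inequality alone would give) requires a vector-valued concentration inequality, and the unbounded errors force a truncation argument in which one must verify that conditioning on the event $\{\max_i|\epsilon_i|\le B\}$ does not spoil the mean-zero and independence structure that inequality needs. Matching the precise numerical constants and the exact probability split that produce $2\log(8/\delta_n)$, $3\kappa_{1,p}$, and the argument $4c_1n/\delta_n$ inside $q^{-1}$ is the fiddly part of the bookkeeping, but it is routine once the decomposition and the two concentration bounds are in place.
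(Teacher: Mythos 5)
Your proposal is correct and follows essentially the same route as the paper, which omits the proof of Proposition~\ref{prop1} and defers to \citet{He20192}; that reference (and its antecedent \citet{Smale2007}) uses exactly your integral-operator decomposition into the approximation error $f_{\lambda_n}-f^*$ and the sample error $\widehat f-f_{\lambda_n}$, with Hilbert-space concentration plus a truncation of the noise under the $q$-exponential tail to produce the $q^{-1}(\log(4c_1n/\delta_n))$ factor. The one point you flag as an obstacle --- that truncation must not destroy the mean-zero structure --- is handled in the standard way there (work with the intersection of events and re-center the truncated errors, controlling the centering bias via the tail bound), so no new idea is needed.
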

The proof of Proposition  \ref{prop1} is similar as that in \citet{He20192} and thus we omit it here.

\begin{proof}[Proof of Lemma \ref{thm1}]
	The proof of Lemma \ref{thm1} is similar as that in \citet{He20192} by using Proposition  \ref{prop1}, the property of  Hilbert-Schmidt operators and the concentration inequalities in Hilbert-Schmidt operator space. Thus we omit the detail here. 
\end{proof}

\begin{proof}[Proof of Lemma \ref{thm3}]
	The proof of Lemma \ref{thm3} is similar as that in \citet{He20192},  and thus we omit the detail here. 
\end{proof}

\begin{proof}[Proof of Theorem \ref{cor: cor 1}]
	Define the event that 
	\begin{eqnarray}
	{\cal C}_1=\left \{  \|\widehat{f}_0 -f^* \|_{\infty} > c_{m, 2} \max \left \{ \kappa_{1,p_0}\|f^*\|_K, q^{-1}\left(\log\frac{2c_1 m}{\delta_m}\right) \right \}\right. \notag \\ 
	\times\left.{\log\left(  \frac{4}{\delta_m}\right)} { m^{-\frac{2r-1}{2(2r+1)}}  } \right \}.
	\end{eqnarray}
	
	Then, the probability $\Pr({\cal C}_1)$ can be decomposed as 
	\begin{align*}
		\Pr \left ({\cal C}_1 \right )&= \Pr\left ({\cal C}_1, \big\{\widehat{\cal A}_{v_m}={\cal A}^* \big \} \right ) + \Pr\left ({\cal C}_1, \big\{\widehat{\cal A}_{v_m}={\cal A}^* \big \} \right ) \\
		&= \Pr\left ({\cal C}_1\mid \big\{\widehat{\cal A}_{v_m}={\cal A}^* \big \} \right ) \Pr\left (\widehat{\cal A}_{v_m}={\cal A}^*\right ) \notag \\
		&\quad+ \Pr\left ({\cal C}_1\mid \big\{\widehat{\cal A}_{v_m}={\cal A}^* \big \} \right ) \Pr\left ( \widehat{\cal A}_{v_m}\neq{\cal A}^* \right ) \\
		&\leq \Pr\left ({\cal C}_1\mid \big\{\widehat{\cal A}_{v_m}={\cal A}^* \big \} \right ) (1-\Delta_m) +  \Delta_m.
	\end{align*}
	By Lemma \ref{thm3}, we have $\Delta_m \rightarrow 0$ and $(1-\Delta_m) \rightarrow 1$.
	For $ \Pr({\cal C}_1\mid \big\{\widehat{\cal A}_{v_m}={\cal A}^* \big \} )$, by applying the proof in Proposition \ref{thm1} conditioning on $\big\{\widehat{\cal A}_{v_m}={\cal A}^* \big\}$, with probability at least $1-\delta_m$, there holds
	\begin{eqnarray}
	\big \|\widehat{f}-f^* \big \|_{K} &\leq  c_{m, 2} \max \left \{ \kappa_{1,p_0}\|f^*\|_K, q^{-1}(\log\frac{2c_1 m}{\delta_m}) \right \} \notag \\ 
	&\times{\log\left(  \frac{4}{\delta_m}\right)} { m^{-\frac{2r-1}{2(2r+1)}}  },\notag
	\end{eqnarray}
	which implies $\Pr({\cal C}_1\mid \big\{\widehat{\cal A}_{v_m}={\cal A}^* \big \} )\leq \delta_m$.
	Combining the above results, we have $\Pr({\cal C}_1 )\leq \delta_m+\Delta_m$. This completes the proof of the first part in Theorem \ref{cor: cor 1}. 
	
	Additionally, by Assumption 4, we have $m=O(n)$, and if we take $r=1$ and assume that $p_0=O(1)$ and $\epsilon$ has sub-Gaussian or sub-exponential tail, there holds 
	$$
	\big \|\widehat{f}-f^* \big \|_{K} =O_{\mathbb{P}}(  n^{-\frac{1}{6}}\log n  ).
	$$
	Note that 
	\begin{eqnarray}
	\big \|\widehat{f}-f^* \big \|_{\infty}&=&\sup_{\bx} \lvert \widehat{f}(\bx)-f^*(\bx)\rvert\notag \\
	&=&\sup_{\bx} \lvert\langle \widehat{f}-f^* , K_{\bx}\rangle_K\rvert \leq  \kappa_{1,p_0} \big \|\widehat{f}-f^* \big \|_{K},\notag
	\end{eqnarray}
	which completes the proof.
\end{proof}

\begin{proof}[Proof of Lemma~\ref{cor: 1}]
	By Assumptions 4--7, \citet{Meier2008} showed (\ref{eq: append 1}). Denote $g(x)=\{1+\exp(-x)\}^{-1}$, and we can show that 
	$
	g'(x)={\mbox{d}g(x)}/{\mbox{d}x} = g(x)\{1-g(x)\}. 
	$
	That is, 
	\begin{equation}\label{eq: append 2}
		\lvert g'(x)\rvert\leq 1,
	\end{equation}
	for any $x$ by the fact that $0\leq g(x)\leq 1$. Thus, by (\ref{eq: append 2}) and the mean value theorem, we conclude that $g(x)$ is Lipschitz continuous in the sense that 
	\begin{equation*}\label{eq: append 3}
		\lvert g(x_1) - g(x_2)\rvert\leq \lvert x_1-x_2\rvert,
	\end{equation*}
	for any $x_1$ and $x_2$ in $\mathbb{R}$.
	By noting the fact that $\widehat{\pi}(\bx) = g\{\eta_{\widehat{\bbeta}_{\lambda_2}}(\bx)\}$ and $\pi^*(\bx) = g\{\eta_{\bbeta_0}(\bx)\}$, by (\ref{eq: append 1}), we have 
	\begin{eqnarray}\label{eq: append 4}
		\E\{ \lvert\widehat{\pi}(\bX) - \pi^*(\bX)\rvert^2\}&\leq& \E\{\lvert\eta_{\widehat{\bbeta}_{\lambda_2}}(\bX) - \eta_{{\bbeta}_0}(\bX) \rvert^2\}\notag \\
		&=& O_{\mathbb{P}}(n^{-2/3-2\zeta}),
	\end{eqnarray}
	where the expectation is taken conditional on $\widehat{\bbeta}_{\lambda_2}$.
	By (\ref{eq: append 4}), we have shown Lemma~\ref{cor: 1}.

\end{proof}

\begin{lemma}\label{lemma: A1 beta dif}
	Suppose  Assumptions 4--7 are satisfied. Then, given $\widehat{\bbeta}_{\lambda_2}$, there holds
	$$
	\max\{\lvert\widehat{\beta}_{k}-\beta_{k}^*\rvert:k=0,\ldots,p\} = O_{\mathbb{P}}(n^{-1/3-\zeta}),
	$$
	where $\widehat{\beta}_{k}$ and $\beta_{k}^*$ are the $(k+1)$th component of $\widehat{\bbeta}_{\lambda_2}$ and $\bbeta^*$, respectively.
\end{lemma}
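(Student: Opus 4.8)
The plan is to transfer the mean-square bound on the linear predictors recorded in (\ref{eq: append 1}) into a coordinatewise bound on $\boldsymbol{\Delta} := \widehat{\bbeta}_{\lambda_2}-\bbeta^*$. Writing $\widetilde{\bX}=(1,\bX^{\top})^{\top}$, the predictor difference is linear in $\boldsymbol{\Delta}$, namely $\eta_{\widehat{\bbeta}_{\lambda_2}}(\bX)-\eta_{\bbeta^*}(\bX)=\widetilde{\bX}^{\top}\boldsymbol{\Delta}$. Because the expectation in (\ref{eq: append 1}) is taken over a fresh $\bX$ conditionally on $\widehat{\bbeta}_{\lambda_2}$, the vector $\boldsymbol{\Delta}$ is frozen under that expectation, so
\[
\E\{|\eta_{\widehat{\bbeta}_{\lambda_2}}(\bX)-\eta_{\bbeta^*}(\bX)|^{2}\}=\boldsymbol{\Delta}^{\top}M\boldsymbol{\Delta},\qquad M:=\E(\widetilde{\bX}\widetilde{\bX}^{\top}).
\]
If I can show $\lambda_{\min}(M)\geq c$ for some fixed $c>0$, then $\boldsymbol{\Delta}^{\top}M\boldsymbol{\Delta}\geq c\,\|\boldsymbol{\Delta}\|_{2}^{2}$, and combining this with (\ref{eq: append 1}) yields $\|\boldsymbol{\Delta}\|_{2}^{2}=O_{\mathbb{P}}(n^{-2/3-2\zeta})$. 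The conclusion is then immediate from the elementary inequality $\max_{0\leq k\leq p}|\widehat{\beta}_{k}-\beta_{k}^{*}|=\|\boldsymbol{\Delta}\|_{\infty}\leq\|\boldsymbol{\Delta}\|_{2}$, since $\sqrt{n^{-2/3-2\zeta}}=n^{-1/3-\zeta}$.

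The substantive step, and the one I expect to be the main obstacle, is the uniform lower bound $\lambda_{\min}(M)>0$ on the \emph{augmented} second-moment matrix. Assumption~5 only controls $\lambda_{\min}\{\E(\bX\bX^{\top})\}\geq c_{min}$, which does not by itself bound $\lambda_{\min}(M)$: writing $M$ in block form with corner entry $1$, off-diagonal block $\boldsymbol{\mu}=\E(\bX)$ and lower-right block $\E(\bX\bX^{\top})$, positive definiteness of $M$ is equivalent, via the Schur-complement criterion, to positive definiteness of $\E(\bX\bX^{\top})-\boldsymbol{\mu}\boldsymbol{\mu}^{\top}=\Var(\bX)$. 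Thus I must rule out the degenerate case in which the constant function lies in the $\mathcal{L}^{2}$-span of the coordinates of $\bX$, i.e.\ asymptotic collinearity between the intercept and the covariates. I would control this by combining Assumption~5 with the normalization in Assumption~6, which bounds $\|\boldsymbol{\mu}\|$, so that Weyl's inequality gives $\lambda_{\min}\{\Var(\bX)\}\geq c_{min}-\|\boldsymbol{\mu}\|^{2}$ and the intercept coordinate contributes the remaining curvature; making this lower bound quantitative and uniform in $n$ is the delicate part.

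Finally I would assemble the pieces: Assumptions~4--7 deliver (\ref{eq: append 1}) through \citet{Meier2008}, exactly as invoked in the proof of Lemma~\ref{cor: 1}; the eigenvalue bound converts the $\mathcal{L}^{2}$-prediction error into the $\ell_{2}$-coefficient error; and dominating the $\ell_{\infty}$-norm by the $\ell_{2}$-norm gives the stated $O_{\mathbb{P}}(n^{-1/3-\zeta})$ rate uniformly over $k=0,\ldots,p$. The arithmetic relating the two rates is routine; the only place where genuine care is needed is the identifiability argument securing $\lambda_{\min}(M)\geq c>0$.
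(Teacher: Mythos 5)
Your argument is essentially the paper's: the paper likewise rewrites the left-hand side of (\ref{eq: append 1}) as the quadratic form $(\widehat{\bbeta}_{\lambda_2}-\bbeta^*)^{\top}\E(\bX\bX^{\top})(\widehat{\bbeta}_{\lambda_2}-\bbeta^*)$, lower-bounds it by $c_{min}\lVert\widehat{\bbeta}_{\lambda_2}-\bbeta^*\rVert_2^2$ via Assumption~5, and finishes with $\max_k(\widehat{\beta}_k-\beta_k^*)^2\leq(\widehat{\beta}_0-\beta_0^*)^2+\sum_{k=1}^p(\widehat{\beta}_k-\beta_k^*)^2$. The one point of divergence is the intercept: the paper applies Assumption~5 directly to the matrix appearing in the quadratic form, implicitly reading $\bX$ there as the augmented vector $(1,\bX^{\top})^{\top}$, whereas you observe that the assumption as literally stated controls only the non-augmented second-moment matrix. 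That observation is fair, and you are more careful than the paper here, but your proposed repair does not close from the stated hypotheses: Weyl's inequality gives $\lambda_{\min}\{\Var(\bX)\}\geq c_{min}-\lVert\E(\bX)\rVert_2^2$, while the normalization in Assumption~6 only yields $\lVert\E(\bX_g)\rVert_2^2\leq\E(\bX_g^{\top}\bX_g$'s trace$)=\mathrm{df}_g$, so $\lVert\E(\bX)\rVert_2^2$ can be of order $p$ and need not fall below $c_{min}$. The clean resolution is simply to read Assumption~5 as imposed on the augmented design matrix, which is what the paper's own proof tacitly does; with that reading your proof coincides with the paper's.
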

\begin{proof}[Proof of Lemma~\ref{lemma: A1 beta dif}]
	Given the estimated parameters for the response model, (\ref{eq: append 1}) can be re-expressed as 
	\begin{eqnarray}
		&&\E\{\lvert\eta_{\widehat{\bbeta}_{\lambda_2}}(\bX) - \eta_{{\bbeta}^*}(\bX) \rvert^2\} \notag \\
		&=& \E\{(\widehat{\bbeta}_{\lambda_2}-\bbeta^*)^{\top}\bX\bX^{\top}(\widehat{\bbeta}_{\lambda_2}-\bbeta^*)\}\notag \\ 
		&=& (\widehat{\bbeta}_{\lambda_2}-\bbeta^*)^{\top} \E(\bX\bX^{\top}) (\widehat{\bbeta}_{\lambda_2}-\bbeta^*)\}\notag \\ 
		&=& O_{\mathbb{P}}(n^{-2/3-2\zeta}),\label{eq: Lemma A1 1}
	\end{eqnarray}
	By Assumption~A5 and (\ref{eq: Lemma A1 1}), we conclude that 
	\begin{eqnarray}
		(\widehat{\bbeta}_{\lambda_2}-\bbeta^*)^{\top}  (\widehat{\bbeta}_{\lambda_2}-\bbeta^*) &=&(\widehat{\beta}_0 - \beta_0^*)^2 + \sum_{k=1}^p(\widehat{\beta}_{k}-\beta_{k}^*)^2\notag \\ 
		&=&  O_{\mathbb{P}}(n^{-2/3-2\zeta}).\label{eq: Lemma A1 2}
	\end{eqnarray}
	Notice that 
	\begin{equation}
		\max\{(\widehat{\beta}_{k}-\beta_{k}^*)^2:k=0,\ldots,p\}\leq  (\widehat{\beta}_0 - \beta_0^*)^2 + \sum_{k=1}^p(\widehat{\beta}_{k}-\beta_{k}^*)^2.\label{eq: Lemma A1 3}
	\end{equation}
	Thus, we have proved Lemma~\ref{lemma: A1 beta dif} by (\ref{eq: Lemma A1 2}) and (\ref{eq: Lemma A1 3}).
\end{proof}

\begin{proof}[Proof of Theorem \ref{thm:clt}]
	
	For simplicity, denote $\pi^*_i = \pi^*(\bx_i)$ and $\widehat{\pi}_i = \widehat{\pi}(\bx_i) = (1+ \exp[-\{(1,\bx_i^{\top})\widehat{\bbeta}_{\lambda_2}\}])^{-1}$.
	What if we consider 
	$$
	\widehat{\theta}_{AIPW} = \frac{1}{n}\sum_{i=1}^n\left[\hmi + \frac{\delta_i}{\widehat{\pi}_i}\left\{y_i - \hmi\right\}\right],
	$$
	where $\hpii$ is an estimator of $\Pr(\delta_i=1\mid \bx_i)$ by the group lasso for logistic regression.
	
	Then, we have 
	\begin{eqnarray}
		\widehat{\theta}_{AIPW} &=& \frac{1}{n}\sum_{i=1}^n\left[\mi + \left\{\hmi - \mi\right\} + \frac{\delta_i}{\pii}\{y_i - \mi\} \right.\notag \\ 
		&&+ \frac{\delta_i}{\hpii}\{y_i - \mi\} - \frac{\delta_i}{\pii}\{y_i - \mi\} \notag \\
		&&\left.+\frac{\delta_i}{\hpii}\{\mi-\hmi\}\right]\notag \\ 
		&=& \frac{1}{n}\sum_{i=1}^n\left[\mi + \frac{\delta_i}{\pii}\{y_i - \mi\} \right]\notag \\ 
		&&+ \frac{1}{n}\sum_{i=1}^n\left\{1-\frac{\delta_i}{\hpii}\right\}\left\{\hmi - \mi\right\} \notag \\ 
		&& + \frac{1}{n}\sum_{i=1}^n\left[ \frac{\delta_i}{\hpii}- \frac{\delta_i}{\pii}\right]\{y_i - \mi\}\notag \\ 
		&=&\frac{1}{n}\sum_{i=1}^n\left[\mi + \frac{\delta_i}{\pii}\{y_i - \mi\} \right]\notag \\  
		&& + \frac{1}{n}\sum_{i=1}^n\left\{1-\frac{\delta_i}{\hpii}\right\}\left\{\hmi - \mi\right\} \notag \\ 
		&& + \frac{1}{n}\sum_{i=1}^n\left[ \frac{\delta_i}{\hpii}- \frac{\delta_i}{\pii}\right]\epsilon_i,\label{eq: A1}
	\end{eqnarray}
	where $\epsilon_i = y_i - \mi$. 
	
	First, we consider the first term of (\ref{eq: A1}), and we have 
	\begin{eqnarray}
		&&\E\left[f^*(\bX) + \frac{\delta}{\pi^*(\bX)}\{Y - f^*(\bX)\} \right]\notag \\ 
		&=& \E\left(\E\left[f^*(\bX) + \frac{\delta}{\pi^*(\bX)}\{Y - f^*(\bX)\} \right] \mid \bX,Y\right)\notag \\ 
		&=& \E(Y),\notag
	\end{eqnarray}
	where $\delta$ is a binary random variable with success probability $\pi^*(\bX)$ conditional on $\bX$.
	Since  $\E\{\lvert f^*(\bX) + \delta\pi^*(\bX)^{-1}\{Y - f^*(\bX)\}\rvert^2\}<\infty$, by the classical central limit theorem \citep[Example~2.1]{van2000asymptotic}, we have  
	\begin{equation}
		\sqrt{n}\left(\frac{1}{n}\sum_{i=1}^n\left[\mi + \frac{\delta_i}{\pii}\{y_i - \mi\} \right] -\theta^*\right)\to \rm{N}(0,\sigma^2),\label{eq: CLT first}
	\end{equation}
	in distribution under regularity conditions, where $\theta = \E(y)$ and $\sigma^2$ is  to be estimated. 
	
	Next, we consider the third term of (\ref{eq: A1}) .
	\begin{eqnarray}
		\frac{1}{n}\sum_{i=1}^n\left[ \frac{\delta_i}{\hpii}- \frac{\delta_i}{\pii}\right]\epsilon_i &=&  \frac{1}{n}\sum_{i=1}^n\frac{\delta_i\epsilon_i}{\pii}\frac{\pii-\hpii}{\hpii}. \notag\label{eq: A2}
	\end{eqnarray}
	By Assumption 4 and Lemma~\ref{cor: 1}, we conclude that $(\pii-\hpii)\hpii^{-1}=o_{\mathbb{P}}(1)$ uniformly for $i=1,\ldots,n$. Since 
	$$ \frac{1}{n}\sum_{i=1}^n\frac{\delta_i\epsilon_i}{\pii}=O_{\mathbb{P}}(n^{-1/2}),$$
	we conclude that 
	\begin{equation}
		\frac{1}{n}\sum_{i=1}^n\left[ \frac{\delta_i}{\hpii}- \frac{\delta_i}{\pii}\right]\epsilon_i=o_{\mathbb{P}}(n^{-1/2}).\label{eq: clt third}
	\end{equation}
	
	Now, we focus on the second term  of (\ref{eq: A1}), and consider 
	\begin{eqnarray}
		&&\frac{1}{n}\sum_{i=1}^n\left\{1-\frac{\delta_i}{\hpii}\right\}\left\{\hmi - \mi\right\}\notag \\
		&=& \frac{1}{n}\sum_{i=1}^n\left\{1-\frac{\delta_i}{\pii}\right\}\left\{\hmi - \mi\right\} \notag\\
		&&+  \frac{1}{n}\sum_{i=1}^n\frac{\delta_i}{{\pii}\hpii}(\hpii - \pii)\left\{\hmi - \mi\right\}.\notag \\\label{eq: A3}
	\end{eqnarray}
	By Theorem~\ref{cor: cor 1}, $\hmi - \mi = O_{\mathbb{P}}(\log(n)n^{-1/6})$ uniformly for $i=1,\ldots,n$. Since $n^{-1}\sum_{i=1}^n\{1-\delta_i(\pi_i^*)^{-1}\}=O_{\mathbb{P}}(n^{-1/2})$, the first term of  is of (\ref{eq: A3}) is of the order $o_{\mathbb{P}}(n^{-1/2})$. Besides, to show the second term of (\ref{eq: A3}) is also of the order $o_{\mathbb{P}}(n^{-1/2})$, it is enough to show  
	\begin{equation}
		\frac{1}{n}\sum_{i=1}^n\frac{\delta_i}{{\pii}\hpii}(\hpii - \pii) = O_{\mathbb{P}}(n^{-1/3-\zeta/2}),\label{eq: A4}
	\end{equation} 
	where $\zeta$ is in Assumption~7.
	
	By  Lemma~\ref{lemma: A1 beta dif}, we conclude that 
	$
	\max\{\lvert\widehat{\beta}_{k}-\beta_{k}^*\rvert:k=0,\ldots,p\}  = o_p(1).
	$
	Denote $A_n$ to be the event that $ \{	\max\{\lvert\widehat{\beta}_{k}-\beta_{k}^*\rvert:k=0,\ldots,p\}\geq C_\kappa \}$, where $C_\kappa$ is a positive constant such that $\min\{\widehat{\bbeta}_{\lambda_2}(\bx) \geq \kappa/2:\bx\in\mathcal{X}\}$. The existence of $C_\kappa$ is guaranteed by the compactness of $\mathcal{X}$ and Assumption~A4.  Then,  we have $P(A_n)\to0$ as $n\to \infty$. On $\bx\in A_n^C$, we conclude $\hat{\pi}(\bx)\geq \kappa/2$.
	
	Since $\{\bx_1,\ldots,\bx_n\}$ is a random sample, given $\widehat{\bbeta}_{\lambda_2}$, for any positive constant $C$, we consider 
	\begin{eqnarray}
		&&\Pr\left(\left\lvert\frac{1}{n}\sum_{i=1}^n\frac{\delta_i}{\pi^*(\bX_i)\widehat{\pi}(\bX_i)}\{\widehat{\pi}(\bX_i) - \pi^*(\bX_i)\}\right\rvert \geq Cn^{-1/3-\zeta/2}\right)\notag \\ 
		&\leq&Pr\left(\left\lvert\frac{1}{n}\sum_{i=1}^n\frac{2\delta_i}{\kappa^2}\{\widehat{\pi}(\bX_i) - \pi^*(\bX_i)\}\right\rvert \geq Cn^{-1/3-\zeta/2}\right) + P(A_n)
		\notag\\
		&\leq& \frac{\E[2\kappa^{-2}n^{-1}\sum_{i=1}^n[\delta_i\{\pi^*(\bX_i)\widehat{\pi}(\bX_i)\}^{-1}\{\widehat{\pi}(\bX_i) - \pi^*(\bX_i)\}]^2}{C^2n^{-2/3-\zeta}}\notag \\ 
		&&+ P(A_n)\notag \\ 
		&\leq&\frac{2\sum_{i=1}^n\E\{\widehat{\pi}(\bX_i) - \pi^*(\bX_i)\}^2}{n\kappa^2C^2n^{-2/3-\zeta}} + P(A_n)\notag \\ 
		&\leq&\frac{\E\{\widehat{\pi}(\bX_1) - \pi^*(\bX_1)\}^2}{\kappa^2C^2n^{-2/3-\zeta}} + P(A_n)\notag\\
		&=&o_{\mathbb{P}}(1),\label{eq: app sec -1/3}
	\end{eqnarray}
	where $\bX_i$ is the random variable associated with $\bx_i$, the first inequality holds by the Markov inequality, the second inequality holds by Assumption~4 and the fact that $\delta_i\leq 1$ for $i=1,\ldots,n$, the third inequality holds since $\bX_1,\ldots,\bX_n$ are identically distributed, and the fourth inequality holds by Lemma~\ref{cor: 1}. By (\ref{eq: app sec -1/3}),  we have validated (\ref{eq: A4}), so  we have 
	\begin{equation}
		\frac{1}{n}\sum_{i=1}^n\left\{1-\frac{\delta_i}{\hpii}\right\}\left\{\hmi - \mi\right\}=o_{\mathbb{P}}(n^{-1/2}).\label{eq: clt second}
	\end{equation}
	
	By (\ref{eq: CLT first}), (\ref{eq: clt third}) and (\ref{eq: clt second}), we have proved Theorem \ref{thm:clt} by the Slutsky's theorem \citep[Theorem~9.1.6]{athreya2006measure}.
\end{proof}

% \section*{Acknowledgments}
% This should be a simple paragraph before the References to thank those individuals and institutions who have supported your work on this article.

%{\appendices
%\section*{Proof of the First Zonklar Equation}
%Appendix one text goes here.
% You can choose not to have a title for an appendix if you want by leaving the argument blank
%\section*{Proof of the Second Zonklar Equation}
%Appendix two text goes here.}

\bibliographystyle{dcu}
\bibliography{bibtex}

% \newpage

% \section{Biography Section}
% If you have an EPS/PDF photo (graphicx package needed), extra braces are
%  needed around the contents of the optional argument to biography to prevent
%  the LaTeX parser from getting confused when it sees the complicated
%  $\backslash${\tt{includegraphics}} command within an optional argument. (You can create
%  your own custom macro containing the $\backslash${\tt{includegraphics}} command to make things
%  simpler here.)
 
% \vspace{11pt}

% \bf{If you include a photo:}\vspace{-33pt}
% \begin{IEEEbiography}[{\includegraphics[width=1in,height=1.25in,clip,keepaspectratio]{fig1}}]{Michael Shell}
% Use $\backslash${\tt{begin\{IEEEbiography\}}} and then for the 1st argument use $\backslash${\tt{includegraphics}} to declare and link the author photo.
% Use the author name as the 3rd argument followed by the biography text.
% \end{IEEEbiography}

% \vspace{11pt}

% \bf{If you will not include a photo:}\vspace{-33pt}
% \begin{IEEEbiographynophoto}{John Doe}
% Use $\backslash${\tt{begin\{IEEEbiographynophoto\}}} and the author name as the argument followed by the biography text.
% \end{IEEEbiographynophoto}

% \vfill

\end{document}